\numberwithin{equation}{section}
\theoremstyle{plain}
\newtheorem{theorem}{Theorem}[section]
\newtheorem{remark}[theorem]{Remark}
\newtheorem{corollary}[theorem]{Corollary}
\newtheorem{definition}[theorem]{Definition}
\newtheorem{lemma}[theorem]{Lemma}
\newtheorem{proposition}[theorem]{Proposition}
\newtheorem{problem}[theorem]{Problem}
\newtheorem*{theorem-nn}{Theorem}[section]
\newtheorem*{example-nn}[theorem]{Example}
\newtheorem*{remark-nn}[theorem]{Remark}
\newtheorem*{corollary-nn}[theorem]{Corollary}
\newtheorem*{definition-nn}[theorem]{Definition}
\newtheorem*{lemma-nn}[theorem]{Lemma}
\newtheorem*{proposition-nn}[theorem]{Proposition}
\newtheorem*{problem-nn}[theorem]{Problem}
\theoremstyle{nonumberplain}
\newtheorem{proof}{Proof}
\theoremstyle{plain}
\newcommand{\R}{\mathbb{R}}
\newcommand{\Z}{\mathbb{Z}}
\newcommand{\F}{\mathcal{F}}
\newcommand{\E}{\mathcal{E}}
\newcommand{\SIt}{\mathcal{S}}
\newcommand{\bigO}{\mathcal{O}}
\DeclarePairedDelimiter{\floor}{\lfloor}{\rfloor}
\renewcommand{\epsilon}{\ensuremath\varepsilon}
\renewcommand{\phi}{\ensuremath{\varphi}}
\title{Efficiently stabbing convex polygons and variants of the Hadwiger-Debrunner $(p, q)$-theorem}
\author[1]{Justin Dallant \thanks{This work was partially supported by the French Community of Belgium via the funding of a FRIA grant.}}
\author[2]{Patrick Schnider \thanks{Has received funding from the European Research Council under the European Unions Seventh Framework Programme ERC Grant agreement ERC StG 716424 - CASe. Part of this work was done when the author was employed at ETH Zürich.}}
\affil[1]{Department of Computer Science, Université Libre de Bruxelles\\
  \texttt{justin.dallant@ulb.be}}
\affil[2]{Department of Mathematical Sciences, University of Copenhagen\\
  \texttt{ps@math.ku.dk}}
\begin{document}

\maketitle

\begin{abstract}
Hadwiger and Debrunner showed that for families of convex sets in $\R^d$ with the property that among any $p$ of them some $q$ have a common point, the whole family can be stabbed with $p-q+1$ points if $p \geq q \geq d+1$ and $(d-1)p < d(q-1)$. This generalizes a classical result by Helly. We show how such a stabbing set can be computed for a family of convex polygons in the plane with a total of $n$ vertices in  $\bigO((p-q+1)n^{4/3}\log^{8} n(\log\log n)^{1/3} + np^2)$ expected time. For polyhedra in $\R^3$, we get an algorithm running in $\bigO((p-q+1)n^{5/2}\log^{10} n(\log\log n)^{1/6} + np^3)$ expected time. We also investigate other conditions on convex polygons for which our algorithm can find a fixed number of points stabbing them.
Finally, we show that analogous results of the Hadwiger and Debrunner $(p,q)$-theorem hold in other settings, such as convex sets in $\R^d\times\Z^k$ or abstract convex geometries.
\end{abstract}

\part*{Introduction}

A classical result in convex geometry by Helly \cite{Helly1923} states that if a family of convex sets in $\R^d$ is such that any $d+1$ sets have a common intersection, then all sets do.
In 1957, Hadwiger and Debrunner \cite{Hadwiger1957} considered a generalization of this setting. Let $\F$ be a family of sets in $\R^d$ and let $p\geq q \geq d+1$ be integers. We say that $\F$ has the $(p,q)$-property if $|\F|\geq p$ and for every choice of $p$ sets in $\F$ there exist $q$ among them which have a common intersection. We further say that a set of points $S$ stabs $\F$ if every set in $\F$ contains at least one point from $S$. Then the following holds.

\begin{restatable}{theorem}{thmHD}[{{Hadwiger and Debrunner \cite{Hadwiger1957}}}]
\label{thm:HD}
Let $d\geq 1$ be an integer. Let $p$ and $q$ be integers such that $p\geq q\geq d+1$ and $(d-1)p<d(q-1)$, and let $\F$ be a finite family of convex sets in $\R^d$. Suppose that $\F$ has the $(p,q)$-property. Then there exist $p-q+1$ points in $\R^d$ stabbing $\F$.
\end{restatable}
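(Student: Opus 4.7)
The plan is to proceed by induction on the quantity $p-q$, with Helly's theorem powering both the base and the inductive step.

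\textbf{Base case ($p=q$).} The $(p,q)$-property says every $p$-subfamily of $\F$ shares a point, and in particular every $(d+1)$-subfamily does since $q\ge d+1$. Helly's theorem then guarantees $\bigcap\F\neq\emptyset$, so a single point pierces $\F$, matching $p-q+1=1$.

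\textbf{Inductive step.} For $p>q$, the plan is to exhibit a single point $x\in\R^d$ such that the residual family $\F':=\{S\in\F:x\notin S\}$ still satisfies the $(p-1,q)$-property. The gap condition is automatically preserved, since $(d-1)(p-1)<(d-1)p<d(q-1)$, and $p-1\ge q$ follows from $p>q$. By induction, $\F'$ can be pierced by $(p-1)-q+1=p-q$ points, and adding $x$ (which pierces every set containing it) yields a stabbing set of size $p-q+1$ for $\F$.

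The main obstacle is constructing such an $x$. The natural candidate is a point in the intersection of some well-chosen $q$-subfamily, which exists by the $(p,q)$-property; I would try to pick $x$ so as to maximize $|\F_x|:=|\{S\in\F:x\in S\}|$ among common points of intersecting $q$-subfamilies. To verify the $(p-1,q)$-property of $\F'$, I would take any $(p-1)$-subfamily $\mathcal{A}\subseteq\F'$, augment it by some $S\in\F_x$, and apply the $(p,q)$-property to $\mathcal{A}\cup\{S\}$ to extract an intersecting $q$-tuple — the hard step being to force that $q$-tuple to lie inside $\mathcal{A}$. The gap condition $(d-1)p<d(q-1)$ is exactly the numerical threshold at which a Helly/averaging argument over the many choices of $S\in\F_x$ (together with the intersection points they contribute via Helly's theorem in $\R^d$) succeeds in producing such a $q$-tuple inside $\mathcal{A}$. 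I expect this averaging/counting step — essentially a Helly-type combinatorial lemma tailored to the gap inequality — to concentrate the technical weight of the proof.
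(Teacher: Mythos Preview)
Your base case is correct, and the overall plan of peeling off one stabbing point and appealing to induction is sound. The gap is exactly where you locate it: the construction of $x$ and the verification that the residual family $\F'$ has the $(p-1,q)$-property. Neither the choice ``maximize $|\F_x|$'' nor the vaguely indicated ``averaging over $S\in\F_x$'' leads anywhere obvious. When you augment a $(p-1)$-subfamily $\mathcal{A}\subseteq\F'$ by a single $S\in\F_x$ and extract an intersecting $q$-tuple, nothing forces that tuple to avoid $S$; varying $S$ just produces many $(q-1)$-subfamilies of $\mathcal{A}$, each intersecting together with its own $S$ at its own point, and no counting mechanism aggregates these into a single intersecting $q$-tuple inside $\mathcal{A}$. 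The gap inequality $(d-1)p<d(q-1)$ does not enter such an argument in any visible way, which is a sign that this line is not the right one.

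The paper's proof takes a different route. It first reduces to the extremal case $(d-1)p=d(q-1)-1$ (when the inequality is slack, one lowers both $p$ and $q$ by the same amount until it is tight), and it works with compact sets. The point $x^*(\F)$ is then defined as the lexicographic \emph{maximum}, over all $d$-subfamilies $\mathcal{S}\subseteq\F$ with nonempty intersection, of the lexicographic \emph{minimum} of $\bigcap\mathcal{S}$. The key lemma---obtained by applying Helly to $\F$ together with the convex half-space $\{y:y<_{lex}x\}$---is that the lex-min of the intersection of any $k\ge d$ sets is already the lex-min of the intersection of some $d$ of them. Consequently, if $\mathcal{G}$ is a $d$-subfamily realizing $x^*(\F)$, then every set in $\F$ not containing $x^*(\F)$ is in fact disjoint from $\bigcap\mathcal{G}$ (otherwise the lex-min of that larger intersection would exceed $x^*(\F)$). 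Now take any $p-d$ residual sets together with all $d$ sets of $\mathcal{G}$: the intersecting $q$-tuple guaranteed by the $(p,q)$-property cannot contain all of $\mathcal{G}$, so it contains at most $d-1$ sets of $\mathcal{G}$ and hence at least $q-d+1$ residual sets. Thus the residual family has the $(p-d,\,q-d+1)$-property---not the $(p-1,q)$-property you were aiming for---and this pair is again admissible precisely because of the extremal equation $(d-1)p=d(q-1)-1$. In short, augmenting by a carefully chosen $d$-tuple $\mathcal{G}$ (rather than a single set $S$) is what makes the argument close, and the lexicographic extremal construction is what produces such a $\mathcal{G}$; your sketch has no analogue of either ingredient.
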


Note that the bound on the number of points needed is tight. That is, for every $p\geq q\geq d+1$ there exist families of convex sets with the $(p,q)$-property where at least $p-q+1$ points are needed to stab the whole family. This is easily seen by considering any family of $p-q+1$ disjoint convex sets where one of them is taken with multiplicity $q$. It is also known that whenever $q \leq d$, there exist families of convex sets with the $(p,q)$-property where arbitrary large number of points are needed. This can be seen by taking $n$ hyperplanes in general position in $\R^d$ (meaning that no two hyperplanes are parallel and no $d+1$ hyperplanes intersect at the same point). Then any $d$ hyperplanes intersect at some point (in other words, they have the $(d,d)$-property) and any single point stabs at most $d$ hyperplanes. Thus, at least $\floor{n/d}$ points are necessary to stab all hyperplanes.

Many related results have since been established. Among the most famous is one from Alon and Kleitman \cite{Alon1992} who in 1992 proved that for any $p\geq q \geq d+1$, there exists a finite upper bound on the maximum number of points needed to stab a family of convex sets with the $(p,q)$-property. However, all the known upper bounds are probably far from being tight in the general case. As an example, for $(p,q,d) = (4,3,2)$, their proof yields an upper bound of $4032$ (while the best known lower bound is $3$). Since then, this number has been proven to lie between $3$ and $13$ (inclusive) \cite{Kleitman2001}. Still, the only values of $p\geq q \geq d+1$ for which exact values are known are those corresponding to \cref{thm:HD}. There is a lot of work in this more general setting, both improving the bounds (e.g. \cite{Keller2018}) as well as adapting to generalizations of convex sets (e.g. \cite{Holmsen2019, Moran2019}), and it is an interesting open problem to study algorithmic questions connected to these results.

Special cases where some further restrictions are imposed on the considered sets have also led to interesting results. One much studied example is obtained by considering only axis-aligned boxes in $\R^d$. In this case, we can already start by strengthening the result given by Helly's theorem, as for a family of axis-aligned boxes in $\R^d$, if all pairs intersect then the whole family intersect. As is expected, this additional structure leads to stronger $(p,q)$ results.  One early result appearing in \cite{Hadwiger1964} is the following (notice the weaker conditions on $p$ and $q$ and the independence on $d$).

\begin{theorem}[\cite{Hadwiger1964}]
Let $d\geq 1$ be an integer. Let $p$ and $q$ be integers such that $2q-2\geq p \geq q \geq 2$ and let $\F$ be a finite family of axis-aligned boxes in $\R^d$. Suppose that $\F$ has the $(p,q)$-property. Then there exist $p-q+1$ points in $\R^d$ stabbing $\F$.
\end{theorem}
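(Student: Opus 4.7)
The plan is to prove the statement by induction on $p-q$, exploiting the fact that axis-aligned boxes have Helly number $2$: any pairwise intersecting family of axis-aligned boxes has a nonempty common intersection.

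For the base case $p=q$, the $(q,q)$-property says that every $q$-subset of $\mathcal{F}$ has a common point. Since $q \ge 2$ and $|\mathcal{F}| \ge q$, every pair of boxes is contained in some such $q$-subset and therefore intersects. Helly's theorem for axis-aligned boxes then gives a common point of the whole family, so $p-q+1 = 1$ point suffices.

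For the inductive step, with $p > q$ and $p \le 2q-2$, I would try to locate a single point $x \in \R^d$ such that $\mathcal{F}' := \{B \in \mathcal{F} : x \notin B\}$ either is empty or still has the $(p-1,q)$-property. Since $p-1 \ge q$ and $p-1 \le 2q-2$, the inductive hypothesis then pierces $\mathcal{F}'$ with $p-q$ points, and adjoining $x$ gives the required $p-q+1$ points for $\mathcal{F}$. The natural candidate for $x$ is a point contained in a maximum number of boxes; by the $(p,q)$-property together with Helly for boxes, such a point lies in at least $q$ boxes.

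The main obstacle is verifying the $(p-1,q)$-property of $\mathcal{F}'$. Given any $p-1$ boxes $C_1,\dots,C_{p-1}$ of $\mathcal{F}'$ and any box $B$ through $x$, applying the $(p,q)$-property to $\{C_1,\dots,C_{p-1},B\}$ produces a $q$-subset with a common point; if some such subset avoids $B$ we are done, and otherwise $B$ yields a $(q-1)$-subset $S_B \subseteq \{C_1,\dots,C_{p-1}\}$ all of whose boxes share a point with $B$. This is the only place where the hypothesis $p \le 2q-2$ enters in an essential way: the sets $S_B$ have size $q-1 > (p-1)/2$, so by pigeonhole any two of them share at least one index, producing an intersecting hypergraph over $\{1,\dots,p-1\}$ with at least $q$ hyperedges coming from the $q$ boxes through $x$. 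Combining this intersecting structure with the Helly-number-$2$ property of boxes should either produce a $q$-subset of the $C_i$'s with a common point or contradict the choice of $x$ as a point of maximum coverage. If this pigeonhole step turns out to be too fragile, the fallback is to induct instead on $|\mathcal{F}|$: when $|\mathcal{F}|=p$ the conclusion is immediate from the $(p,q)$-property by piercing a witnessing $q$-subset with one point and the remaining $p-q$ boxes individually, and for $|\mathcal{F}|>p$ one appeals to a minimal-counterexample argument where extremality together with Helly for boxes should force a contradiction under $p \le 2q-2$.
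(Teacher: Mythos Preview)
First, a framing point: the paper does \emph{not} prove this theorem. It is quoted in the introduction as a known result from \cite{Hadwiger1964}, with no argument given; in fact the paper later remarks explicitly that axis-aligned boxes do \emph{not} fit into its Ordered-Helly framework (the convex-order axiom fails), so the machinery developed here does not yield this statement either. There is thus no ``paper's own proof'' to compare your attempt against.

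Evaluating your proposal on its own merits: the base case is fine, and the overall inductive strategy (peel off one point, drop to $(p-1,q)$) is reasonable. The genuine gap is exactly where you flag it yourself: you do not prove that $\mathcal{F}'=\{B\in\mathcal{F}:x\notin B\}$ has the $(p-1,q)$-property. Your pigeonhole observation is correct---each witnessing set $S_B\subseteq\{1,\dots,p-1\}$ has size $q-1>(p-1)/2$, so any two of them meet---but pairwise-intersecting $(q-1)$-subsets of a $(p-1)$-set need not have a common element. Already for $(p,q)=(4,3)$ the three $2$-subsets of $\{1,2,3\}$ are pairwise intersecting with empty common intersection, so you cannot simply extract a single $C_i$ hit by every $B$. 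You then assert that ``combining this intersecting structure with the Helly-number-$2$ property of boxes should'' finish the argument, but this is a hope, not a proof: from
\[
B_1\cap C_1\cap C_2\neq\emptyset,\quad B_2\cap C_1\cap C_3\neq\emptyset,\quad B_3\cap C_2\cap C_3\neq\emptyset,\quad x\in B_1\cap B_2\cap B_3,
\]
it is not at all obvious how to conclude either $C_1\cap C_2\cap C_3\neq\emptyset$ or the existence of a point covered by more boxes than $x$. (If the $S_B$ \emph{did} share a common index $i$, then $C_i$ would pairwise intersect all the $B_j$ and Helly for boxes would give a point in $C_i\cap\bigcap_jB_j$, contradicting maximality of $x$---but you have not reduced to that situation.) Your fallback, a minimal-counterexample induction on $|\mathcal{F}|$, is stated only as a slogan and faces the same missing combinatorial lemma. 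As it stands, the proposal does not contain a proof.
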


Another example is when all sets are translations either with or without scaling of some convex set $K$. Here, strong results exist only for some very simple cases such as $K$ being a $d$-dimensional cube or ball. For example the maximum number of points needed to stab families of discs in the plane with the $(p,2)$-property lies between $4p-4$ and $7p-10$ inclusive \cite{Wegner1985}. These bounds are tight for $p=2$, that is, in the case of pairwise intersecting discs.

From an algorithmic point of view, little work seems to have been done towards computing these stabbing points. One instance which has recently received some attention is the aforementioned case of pairwise intersecting discs in the plane. In \cite{Harpeled2018}, it was shown how such a family can be stabbed with $5$ points in linear time (which is one more point than the theoretical optimum). Shortly after a linear time algorithm for stabbing such a family with only $4$ points was found \cite{Carmi2018}. However, the computation of small stabbing sets for families of general convex polyhedra with the $(p,q)$-property seems to not have been studied and will constitute one part of this paper, in the setting of \cref{thm:HD}.

For a great overview of the studied questions and known results around $(p,q)$ problems, we refer the interested reader to the 2003 survey by Eckhoff \cite{Eckhoff2003}.

Before continuing, we would also like to mention that Helly's theorem has been generalized to many other settings, as this will come in play in the second part of this paper. In general, we say that a set system has \emph{Helly number} $h$ if the following holds: if any $h$ sets in the set system have a common intersection, then the whole set system does.
Helly numbers have been shown to exist for many set systems, such as convex sets in $\R^d\times\Z^k$ \cite{Averkov2012, Hoffman2006} or abstract convex geometries (see \cite{Edelman1985} or Chapter III of \cite{Korte1991}), which include subtrees of trees and ideals of posets.
In many of these cases, the proofs can be adapted to show a constant stabbing number analogous to the result by Alon and Kleitman.
In this work, we will show that under some weak conditions, the existence of a Helly number implies a tight Hadwiger-Debrunner type result.

\part*{Stabbing convex polytopes}

\section{The general dimension case}

\subsection{A proof of the Hadwiger-Debrunner theorem}

We will first consider a proof of \cref{thm:HD} which will naturally lead to an algorithm for finding stabbing points. In \cite{Matousek2002}, the proof of this theorem is left as an exercise, yet the hint suggests that the intended solution is close to the proof below. The main differences with other proofs for this theorem are that it is more constructive and does not make use of a separating hyperplane, which will make it easier to adapt to other settings later on.

We will make use of a lemma which can also be found in \cite{Matousek2002}. We include the proof as we will later use the same ideas to prove a similar lemma. For a non-empty compact set $S$, let $lexmin(S)$ denote its lexicographical minimum point. Then we have the following.

\begin{lemma}\label{lemma:lexmin}
Let $\F$ be family of at least $d+1$ compact convex sets in $\R^d$, such that $I := \bigcap \F$ is non-empty. Let $x := lexmin(I)$. Then, there exist a subfamily $\mathcal{H} \subset \F$ of size $d$ such that $x = lexmin(\bigcap \mathcal{H})$.
\end{lemma}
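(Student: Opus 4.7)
The plan is to apply Helly's theorem to the family $\F$ enlarged with one carefully chosen convex set that encodes the lex-min property of $x$. Let
\[
L := \{y \in \R^d : y <_{lex} x\}
\]
denote the set of points strictly lexicographically smaller than $x$. Since $x = lexmin(I)$, no point of $I$ can lie in $L$, so $I \cap L = \emptyset$, i.e., the augmented family $\F \cup \{L\}$ has empty total intersection.

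The crucial step, which I expect to be the main technical obstacle, is verifying that $L$ itself is convex, despite looking at first sight like a disjoint union of affine slices of varying dimensions. To see this, take $y, z \in L$ and let $j$ be the smallest index such that $y_j \neq x_j$ or $z_j \neq x_j$. Then both $y$ and $z$ agree with $x$ on the first $j-1$ coordinates, while at coordinate $j$ at least one of $y_j, z_j$ is strictly less than $x_j$ and the other is at most $x_j$. A short computation shows that any convex combination $w = \lambda y + (1-\lambda) z$ with $\lambda \in (0, 1)$ agrees with $x$ in coordinates $1, \ldots, j-1$ and is strictly smaller than $x_j$ in coordinate $j$, hence lies in $L$; for $\lambda \in \{0, 1\}$ the combination recovers $y$ or $z$, which is already in $L$.

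With $L$ convex, I apply Helly's theorem (in its contrapositive form) to the finite family of convex sets $\F \cup \{L\}$ in $\R^d$. Since the total intersection is empty, some $d+1$ of these sets already have empty intersection. These $d+1$ sets cannot all come from $\F$, because $\bigcap \F = I \neq \emptyset$; hence $L$ must be among them, yielding $d$ sets $C_1, \ldots, C_d \in \F$ with $C_1 \cap \cdots \cap C_d \cap L = \emptyset$. Setting $\mathcal{H} := \{C_1, \ldots, C_d\}$, we have $x \in I \subseteq \bigcap \mathcal{H}$, and no point of $\bigcap \mathcal{H}$ lies strictly lex-below $x$, which together give $lexmin(\bigcap \mathcal{H}) = x$ as required.
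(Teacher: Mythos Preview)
Your proof is correct and follows essentially the same route as the paper's: define the set $S_x$ (your $L$) of points strictly lex-smaller than $x$, observe it is convex and disjoint from $I$, apply Helly to $\F \cup \{L\}$, and conclude that the $d+1$ sets with empty intersection must include $L$, leaving $d$ sets from $\F$ whose intersection has lex-min $x$. The only difference is that you spell out the convexity of $L$ in detail, whereas the paper simply asserts it.
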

\begin{proof}
Let $\F$, $I$ and $x$ be as specified in the statement. Let $S_x$  denote the set of all points lexicographically smaller than $x$. This set is convex and is disjoint from $I$. By Helly's theorem, there exists a subfamily of $d+1$  members of $\F\cup\{S_x\}$ with an empty common intersection. These members have to include $S_x$, as all members of $\F$ have a non-empty common intersection. Let $\mathcal{H} \subset \F$ be the family consisting of the remaining $d$ sets and let $x_\mathcal{H}$ be the lexicographical minimum point of $I':=\bigcap \mathcal{H}$ (which is compact and non-empty). $x_\mathcal{H}$ can not be lexicographically larger than $x$ because $\mathcal{H} \subset \F$ and it can not be lexicographically smaller than $x$ because $I'\cap S_x = \emptyset$. Thus, $x_\mathcal{H} = x$.
\end{proof}

Recall the theorem we wish to prove:

\thmHD*

\begin{proof}
We will prove the theorem for families of compact convex sets, as we will only deal with such families later. One can however reduce the original theorem to this one (see \cref{app:compact}), so this is done without loss of generality.

Call a pair of integers $(p, q)$ \emph{admissible} if $p\geq q \geq d+1$ and $(d-1)p<d(q-1)$. Let $(p,q)$ be an admissible pair, and let $\F$ be a family of compact convex sets of $\R^d$ with the $(p,q)$-property. 

We reason by induction on $p$, the base case being $p=q=d+1$ which is Helly's theorem.

If $p=q>d+1$, then $\F$ also has the $(d+1,d+1)$ property (as having the $(p,q)$-property implies having the $(p-1,q-1)$-property) and the result again follows from Helly's theorem.

So suppose that $p>q$ and that the result is true for any admissible pair $(p',q')$ with $p'<p$. 

If $(d-1)p=d(q-1)-k-1$ for $k\geq 1$, then notice that $(p-k,q-k)$ is an admissible pair, as in that case $(d-1)(p-k) = d(q-k-1)-1$ which together with $p>q$ also implies that $q-k \geq d+1$. Thus the result follows from the induction hypothesis.  

It now remains to consider the case where $p>q$ and $(d-1)p = d(q-1)-1$. 

To do so, let us construct a point $x^*(\F)$ as follows: 
\begin{itemize}
\item For every non-empty subfamily $\SIt \subset \F$ of $d$ convex sets with non-empty intersection, let $x_{\SIt}$ be the lexicographical minimum of $I_{\SIt} = \bigcap \SIt$.
\item Let $x^*(\F)$ be the lexicographical maximum point among all such $x_{\SIt}$'s.
\end{itemize}

Let $\mathcal{G}$ be one of the families defining $x^*(\F)$, that is, $\mathcal{G} \subset \F'$ is a subfamily of $d$ sets which have $x^*(\F)$ as the lexicographical minimum of their intersection. 

To establish the theorem, it is enough to show that by choosing $x^*(\F)$ as one of our stabbing points, we can stab all the remaining sets (i.e. those which do not contain $x^*(\F)$) with $p-q$ points. Let $\mathcal{R} = \{C \in \F\ |\ x^*(\F) \not\in C\}$ be the set of remaining sets.

Let us argue that for any $S\in \mathcal{R}$, $S \cap (\bigcap \mathcal{G})$ is empty. To do so, suppose it was not, and let $y$ be the lexicographical minimum of that intersection. By \cref{lemma:lexmin}, $y$ is the lexicographical minimum of the intersection of $d$ sets in $\F$. Moreover, by definition of $\mathcal{R}$ and $\mathcal{G}$, $y$ is lexicographically larger than $x^*(\F)$. This contradicts the definition of $x^*(\F)$. Thus, $S \cap (\bigcap\mathcal{G})$ is empty.

Two cases arise:
\begin{enumerate}
\item $(|\mathcal{R}| \geq p-d)$ We show that $\mathcal{R}$ has the $(p-d,q-d+1)$-property. Indeed, choose any $p-d$ members from $\mathcal{R}$ together with the $d$ members from $\mathcal{G}$. We know from the $(p,q)$-property of $\F$ that there exists a subfamily $\mathcal{E} \subset \mathcal{R}\cup \mathcal{G}$ of size $q$ whose members have a non-empty common intersection. $\mathcal{E}$ cannot contain all elements of $\mathcal{G}$, as $q>d = |\mathcal{G}|$ and the intersection of all members of $\mathcal{G}$ together with any member of $\mathcal{R}$ is empty. Thus, $\mathcal{E}$ contains at least $q-d+1$ members of $\mathcal{R}$. This shows that $\mathcal{R}$ has the $(p-d,q-d+1)$-property. Notice that with the assumptions $p>q$ and $(d-1)p = d(q-1)-1$ which we are working under, $(p-d, q-d+1)$ is admissible. Thus, by the induction hypothesis, $\mathcal{R}$ can be stabbed with $p-d - (q-d+1) + 1 = p-q$ points.
\item $(|\mathcal{R}| < p-d)$ In this case, choose $\mathcal{R}$ as a whole together with $\mathcal{G}$ and $p-d-|\mathcal{R}|$ other members of $\F$. By the same reasoning as in case 1., there exists a subset of $q-(d-1+p-d-|\mathcal{R}|) = |\mathcal{R}|+ 1 + q - p$ members of $\mathcal{R}$ which intersect and can thus be stabbed by a single point. The remaining $|\mathcal{R}| - (|\mathcal{R}|+ 1 + q - p) = p-q-1$ sets can trivially be stabbed by $p-q-1$ points.
\end{enumerate}

Thus, $\mathcal{R}$ can be stabbed by $p-q$ points, which implies that $\F$ can be stabbed by $p-q+1$ points and by induction, concludes the proof.
\end{proof}

\subsection{A first algorithm}

This proof naturally leads to an algorithm. Let $d>0$ be some fixed dimension and let $\F$ be a family of compact convex polytopes with $(p,q)$-property, described as intersections of a total of $n$ halfspaces in general position. For simplicity we assume that the common intersection of any $d$ of these polytopes is either empty or contains a unique point with minimum $x$-coordinate (which is then also the lexicographically minimum point in the intersection).

The algorithm works as follows:

\begin{enumerate}
\item Reduce $p$ and $q$ (as done in the proof of \cref{thm:HD}) to reach the case where $p=q=d+1$ or the case where $p>q$ and $(d-1)p = d(q-1)-1$. 
\item Construct a point $x^*(\F)$ defined as in the proof. We choose it as one of our stabbing points.
Now, remove from $\F$ all the sets that are stabbed by this point. If there are any remaining sets then either $|\F| \geq p-d$ and $\F$ satisfies the $(p-d,q-d+1)$-property, where $(p-d,q-d+1)$ is admissible, or $\F$ consists of $p-q+k$ sets, $k < q-d$, where some $k+1$ of them have a common intersection.
\item In the first case, we can continue inductively.
\item In the second case we can trivially stab the remaining sets using $p-q$ points.
\end{enumerate}

The correctness of the algorithm follows immediately from the proof of \cref{thm:HD}. The only detail that needs some additional scrutiny is the correctness for the base case $p=q=d+1$. Notice that in this case all sets have a common intersection and \cref{lemma:lexmin} ensures that $x^*(\F)$ stabs the whole family $\F$.

Regarding the runtime of Step 2, the most natural way to compute $x^*(\F)$ gives the following.
\begin{lemma}\label{lemma:d-wise-intersect}
We can compute $x^*(\F)$ in $\bigO(n|\F|^{d-1})$ time.
\end{lemma}
\begin{proof}
For every polytope $P\in \F$, we let $n(\mathcal{G})$ denote the total number of halfspaces describing $P$.
For every subfamily $\mathcal{G}$ of $\F$, we let $n(\mathcal{G}) := \sum_{P\in\mathcal{G}}n(P)$.

We can compute $x^*(\F)$ by testing for intersection in every subfamily $\mathcal{G}$ of $\F$ of size $d$ and computing the point with minimum $x$-coordinate of that intersection if it is non-empty. We then take the lexicographically maximum point among all those computed.

If we consider some fixed subfamily $\mathcal{G}$, this computation can be done in $\bigO(n(\mathcal{G}))$ time using linear programming in constant dimension. Thus, the computation for that subfamily will cost at most $c\cdot n(\mathcal{G})$ for some constant $c$ which does not depend on $\mathcal{G}$. Charge this cost to the polytopes $P \in \mathcal{G}$ by attributing a cost of  $c\cdot n(P)$ to a polytope $P$. 

Now, consider the cost charged to some fixed polytope $P$ for the whole computation. As $P$ appears in no more than $|\F|^{d-1}$ subfamilies of size $d$, its total cost charge is upper bounded by $c\cdot n(P)\cdot |\F|^{d-2}$. Summing across all polytopes $P\in\F$, we get a total cost of $\bigO(n\cdot|\F|^{d-1}))$.
\end{proof}

This quantity needs to be computed at most $p-q+1$ times with the family $\F$ decreasing in size each time.

For Step 4, we have the following.
\begin{lemma}\label{lemma:base-stab-poly}
For any $k>0$, we can find a point stabbing $k$ polytopes of $\F$ in $\bigO(n|\F|^d)$ time, if such a point exists.
\end{lemma}
\begin{proof}
If $k\leq d$, then we can test every subfamily of size $k+1$ for common intersection and compute a point in the intersection for a total cost of $\bigO( n|\F|^{k}) \leq \bigO( n|\F|^{d})$.

If $k> h-1$, then we know from \cref{lemma:lexmin} that the lexicographical minimum of the intersection of $k$ convex polytopes is also the the lexicographical minimum of the intersection of some $d$ polytopes in $\F$ (which in our case is also the point with minimum $x$-coordinate in the intersection). Thus, one can find a point stabbing at least $k$ sets by computing the point with minimum $x$-coordinate for each subfamily of size $d$ (in $\bigO(n|\F|^{d})$ time) and counting the number of sets intersected for each of the $\bigO(|\F|^{d})$ computed points (in $\bigO(d|\F|^{d})$ time as well).
\end{proof}

Because we know that when reaching Step 4 we have $|\F| < p$, it follows that Step 4 can be done in $\bigO(np^d)$ time.

Thus, we get a total runtime of 
\[\bigO((p-q+1)n^d + np^{d}).\]

If $p$ (and thus $q$) is small compared to $n$, the bottleneck in the computation time the first term, which scales as $\bigO(n^d)$ with respect to $n$. The natural question that now comes to mind is: can we do better than $\bigO(n^d)$? We will see in the following section that we can indeed do better at least in dimensions 2 and 3, although at the cost of considering expected rather than worst-case runtime.

\begin{remark} \label{remark:constant_poly}
If we further restrict the problem to only consider convex polytopes described by at most a constant number of halfspaces each, then the second term in the runtime becomes $\bigO(p^{d+1})$. In the plane, this term can further be improved from $\bigO(p^3)$ to $\bigO(p^2\log p)$ by adapting the Bentley-Ottmann sweep line algorithm \cite{Bentley1979} (see \cref{app:bentley-ottmann} for more details). On the other hand, one can easily reduce the problem of finding a point stabbing at least three lines among $p$ lines to the problem of Step 4 in the above algorithm (for $k>1$) in linear time if we allow for infinitesimally thin polygons. This problem is 3-SUM hard (see \cite{Gajentaan1995}, where the concept of 3-SUM hardness was first introduced). There is a strong belief that such problems can not be solved in $O(p^{2-\epsilon})$ time, which means that Step 4 can probably not be solved in $O(p^{2-\epsilon})$ time either, even for constant-size polygons.
\end{remark}

\section{Faster algorithms for 2D and 3D polytopes}

In what follows we deal with the cases $d=2$ and $d=3$. Note that for $d\leq 3$ we can get the vertex representation of our polytopes as well as the faces of all dimension from the halfspace representation in $O(n\log n)$ time by computing the convex hulls of the dual point sets. Thus we will assume that we have access to the vertices and edges and faces of our polygons and polyhedra as the $O(n\log n)$ overhead will be dominated by the rest of our algorithms.

\subsection{The planar case}

In this whole section, the family $\F$ consists of compact convex polygons with a total of $n$ (distinct) vertices in the plane and has the $(p,q)$-property, for some admissible pair $(p,q)$. For the sake of simplicity, we will assume that the lines defining the polygon edges are in general position, non-vertical and that all points defined as the lexicographical minimum in the intersection of a pair of sets have different $x$-coordinates. Under these assumption the lexicographical minimum in a polygon (or intersection of polygons) is simply the leftmost point. 

We break down the computation of $x^*(\F)$ into two parts. Consider two intersecting polygons $P_1$ and $P_2$. The point $x$ which is the leftmost of $P_1\cap P_2$ can be of one of two types. Either (case 1) $x$ is the leftmost point of $P_1$ (resp. $P_2$) and is contained in the interior of $P_2$ (resp. $P_1$) or (case 2) $x$ is the proper intersection of an upper-hull edge $e_u$ of $P_1$ (resp. $P_2$) and a lower-hull edge $e_\ell$ of $P_2$ (resp. $P_1$) with the following property: the outward facing normal vectors of $e_U$ and $e_L$ form a (counter-clockwise orientated) angle of less than $180$ degrees. Reciprocally, an upper-hull and a lower-hull edge which intersect with this property define the leftmost point of an intersection of two polygons. 

We define $x_1^*(\F)$ to be the rightmost point among all pairs of intersecting polygons in $\F$ corresponding to the first case (or $x_1^*(\F)=(-\infty, \infty)$ if there is no such pair), and similarly for $x_2^*(\F)$ and the second case. It is clear that $x^*(\F)$ is the rightmost point of $\{x_1^*(\F), x_2^*(\F)\}$.

We will use the following result, which can be obtained by an adaptation of the proof of Matou\v{s}ek's Theorem 6.2 in \cite{Matousek1992} with the halfspace partition tree construction from Chan \cite{Chan2016} (see \cref{app:range-query}).
\begin{restatable}{theorem}{thmRangeQuery}\label{thm:range-query}
Let $S$ be a set of $n$ objects, $k$ a constant, and $\phi_1, \phi_2, \ldots, \phi_k$ mappings from $S$ to $\R^d$. Let $\phi_S$ be the function which maps $k$-tuples of halfspaces $H_1, H_2, \ldots, H_k$ of $\R^d$ to the set
\[\phi_S(H_1, H_2, \ldots, H_k) :=\{s\in S \mid \phi_1(s) \in H_1,\phi_2(s) \in H_2\ldots, \phi_k(s) \in H_k \}.\]
Suppose we have computed the point sets $\phi_1(S), \ldots, \phi_k(S)$ and let $n \leq m \leq n/\log^{\omega(1)}n$. Then we can preprocess the point sets in $\bigO(n\log^{k} n + m)$ time such that $|\phi_S(H_1, H_2, \ldots, H_k)|$ can be computed in $\bigO((n/m^{1/d})(\log n)^{2(k+(k-d-1)/d)}(\log\log n)^{1/d})$ expected time for any $k$-tuple of halfspaces.
\end{restatable}
Note that we have made no big effort in minimizing the polylog factor in the query runtime. It is thus conceivable that a more careful use of the tools in \cite{Matousek1992, Chan2016} could make this factor smaller.

We can use this result to prove the following.
\begin{lemma}
We can compute $x_1^*(\F)$ in $\bigO(n^{4/3}\log^{4} n(\log\log n)^{1/3})$ expected time.
\end{lemma}
\begin{proof}
To compute $x_1^*(\F)$, we can test for each polygon if its leftmost point is contained in the interior or another, and keep the rightmost point among those which are. We triangulate all polygons, so that this reduces to testing, for each of the $\bigO(n)$ leftmost points, if it is in the interior of one of the $\bigO(n)$ triangles. In the dual plane, this can be expressed as the composition of three half-plane range queries. Using \cref{thm:range-query} with $d=2$, $k=3$, and $m = n^{4/3}\log^{4} n(\log\log n)^{1/3}$, we can thus preprocess the $\bigO(n)$ triangles in $\bigO(m)$ time such that counting how many triangles contain a particular point can be done in  $\bigO(n^{1/3}\log^{4} n(\log\log n)^{1/3})$ expected time. By querying all points we get the result.
\end{proof}

It remains to see how to compute $x_2^*(\F)$ in subquadratic time. For this we use a a simple but remarkably powerful technique discovered by Chan \cite{Chan1999}, which reduces many optimization problems to the corresponding decision problem, with no blow-up in expected runtime.

\begin{lemma}
\label{lemma:chan}
Let $\alpha<1$ and $r$ be fixed constants. Let $f: \mathcal{P} \rightarrow \mathcal{Q}$ be a function that maps inputs to values in a totally ordered set (where elements can be compared in constant time) with the following properties.
\begin{enumerate}
\item For any input $P\in \mathcal{P}$ of constant size, $f(P)$ can be computed in constant time.
\item For any input $P\in \mathcal{P}$ of size $n$ and any $t\in \mathcal{Q}$, we can decide $f(P)\leq t$ in time $T(n)$.
\item For any input $P\in \mathcal{P}$ of size $n$, we can construct inputs $P_1,\ldots,P_r \in \mathcal{P}$ each of size at most $\lceil \alpha n \rceil$ in time $T(n)$, such that 
$f(P) = \max\{f(P_1),\ldots,f(P_r)\}$.
\end{enumerate}
Then for any input $P\in\mathcal{P}$, we can compute $f(P)$ in $\bigO(T(n))$ expected time, assuming that $T(n)/n^\epsilon$ is monotone increasing for some constant $\epsilon > 0$.
\end{lemma}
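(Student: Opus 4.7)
The plan is to apply T.\ Chan's randomized optimization scheme in its standard form: recursively compute $f(P)$ by constructing the $r$ subproblems supplied by property~3 and processing them in a uniformly random order, using the decision oracle of property~2 to avoid recursing into subproblems that cannot improve the running maximum. Concretely, for input $P$ of size $n$ my algorithm constructs $P_1,\ldots,P_r$, draws a uniformly random permutation $\pi$ of $\{1,\ldots,r\}$, and iterates through $P_{\pi(1)},\ldots,P_{\pi(r)}$ while maintaining a candidate bound $t^{*}$ initialized to $-\infty$. At step $i$, I first ask the oracle whether $f(P_{\pi(i)}) \le t^{*}$; if yes, $P_{\pi(i)}$ is discarded; if no, the algorithm recurses on $P_{\pi(i)}$ to obtain $f(P_{\pi(i)})$ exactly and then updates $t^{*}$. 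The base case, reached when the input has constant size, is handled by property~1.

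Correctness is immediate by induction: skipping a subproblem whose $f$-value is at most $t^{*}$ cannot change $\max_i f(P_i)$, so the value of $t^{*}$ at the end of the loop equals $\max_i f(P_i) = f(P)$.

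For the expected runtime, the central probabilistic fact is that when $r$ numbers are revealed in uniformly random order, the expected number of \emph{left-to-right maxima} (positions where the current value exceeds all previous ones) is the harmonic number $H_r$, which is $O(1)$ since $r$ is a fixed constant. Exactly these positions trigger a recursive call, so by linearity of expectation I would obtain the recurrence
\[
E(n) \;\le\; T(n) \;+\; r\cdot T(\ceil{\alpha n}) \;+\; H_r\cdot E(\ceil{\alpha n}),
\]
where the three terms account for constructing the subproblems, the $r$ decision-oracle calls, and the recursive calls on the ``failing'' indices, respectively.

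The main obstacle is solving this recurrence to $O(T(n))$. Here the hypothesis that $T(n)/n^{\epsilon}$ is monotone increasing becomes essential: up to lower-order effects from the ceiling, it gives $T(\ceil{\alpha n}) \le \alpha^{\epsilon}\, T(n)$, so unrolling the recurrence produces a geometric series whose ratio is essentially $H_r\,\alpha^{\epsilon}$. A standard inductive argument with hypothesis $E(n) \le C\, T(n)$ then closes the proof, with $C$ chosen large enough after $\alpha$, $r$, and $\epsilon$ are fixed so that the geometric sum is absorbed into the leading term. The only secondary subtlety is that the nested randomness composes cleanly: each recursive call draws its own independent permutation of its own smaller collection of subproblems, so conditioning and linearity of expectation apply at every level without issue.
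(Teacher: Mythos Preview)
The paper does not actually prove this lemma; it is quoted from Chan~\cite{Chan1999} and used as a black box, so there is no in-paper argument to compare against. Your sketch is the standard Chan argument and is correct up to and including the recurrence $E(n)\le c\,T(n)+H_r\,E(\ceil{\alpha n})$, but the final step contains a real gap. The inductive hypothesis $E(n)\le C\,T(n)$ closes only when the geometric ratio $H_r\alpha^{\epsilon}$ is strictly below~$1$; otherwise no finite $C$ satisfies $C\ge c+H_r\alpha^{\epsilon}C$. Nothing in the lemma's hypotheses forces $H_r\alpha^{\epsilon}<1$, and in fact the inequality fails already for the paper's own application: with $r=d+1=3$, $\alpha=d/(d+1)=2/3$ and $T(n)=n^{4/3+o(1)}$ (so the best usable $\epsilon$ is $4/3$), one gets $H_3\,(2/3)^{4/3}\approx 1.07>1$. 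Your algorithm as written can therefore make far more than $\bigO(T(n))$ expected work.

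The standard remedy is to iterate property~3 a fixed number $L$ of times before randomizing, producing $r^L$ subproblems of size at most $\ceil{\alpha^L n}$, and then run your permutation-and-prune loop on those. The relevant ratio becomes $H_{r^L}\,\alpha^{L\epsilon}$; since $H_{r^L}=\Theta(L\log r)$ grows only linearly in $L$ while $\alpha^{L\epsilon}$ decays geometrically, a constant $L=L(r,\alpha,\epsilon)$ always exists with $H_{r^L}\alpha^{L\epsilon}<1$ (for the paper's parameters $L=2$ already suffices). The $\bigO(r^L)$ decomposition and decision calls still cost $\bigO(T(n))$ because $r$ and $L$ are constants, and with this adjustment your inductive argument goes through verbatim.
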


We can apply this technique to the computation of $x_2^*$.
Here, each $P\in\mathcal{P}$ is a set of edges which are oriented depending on which side the polygon it bounds lies on, $\mathcal{Q}$ is the plane with lexicographical order, and $f(P)$ is $x_2^*$ (we abuse notation slightly by using $x_2^*$ both for sets of oriented edges and sets of polygons).
We make the following observations.

\begin{enumerate}
\item For any constant-size set $\E$ of oriented edges, $x_2^*(\E)$ can be computed in constant time. This verifies property 1.
\item For any family $\E$ of $n$ oriented edges, we can partition it into $3$ disjoint subfamilies $S_1,S_2,S_3$ of size between $\lfloor n/3 \rfloor$ and $\lceil n/3 \rceil$ each. Then, let $\E_1 := S_2\cup S_3$, $\E_2 := S_1\cup S_3$ and $\E_3 := S_1\cup S_2$. Every set $\E_i$ is of size $|\E_i| \leq \lceil 2n/3 \rceil$. Thus, $x_2^*(\F)$ is the rightmost point among $\{x_2^*(\F_1),x_2^*(\F_2), x_2^*(\F_3)\}$. These families can be constructed in $\bigO(n)$ time. This verifies property 3, assuming $T(n) \geq \Omega(n)$ (which it will be).
\end{enumerate}

Thus, in order to apply Chan's framework, it remains to decide $x_2^*(\E)\leq_{lex} t$ quickly.
\begin{lemma}
For any point $t$ in the plane and a set of $n$ oriented edges $\E$, we can decide $x_2^*(\E)\leq_{lex} t$ in $\bigO(n^{4/3}\log^{8} n(\log\log n)^{1/3})$ expected time.
\end{lemma}
\begin{proof}
We can rephrase $x_2^*(\E)\leq_{lex} t$ as deciding whether there exist two oriented edges in $\E$ which intersect at an appropriate angle to the right of the vertical line $\ell$ passing through $t$. Thus we start by discarding all the (parts of) segments in $\E$ which lie to the left of $\ell$.
We then want to preprocess the $\bigO(n)$ segments corresponding to upper-hull edges (i.e. those with an outward facing normal pointing up) such that for any lower-hull edge $e_L$ we can detect if there is an upper-hull edge which intersects it at an appropriate angle quickly.

Map each upper-hull edge $e_U$ to its endpoints $a(e_U)$, $b(e_U)$ and to the point $p^*(e_U)$ dual to the line supporting it. Now for a lower-hull edge $e_L$, let $R(e_L)$ denote the region of the plane corresponding to all points whose dual line intersects $e_L$ an appropriate angle. This region is a convex polygon with at most $4$ edges. Thus it can be partitioned into two triangles $R_1(e_L)$ and $R_2(e_L)$. Call $\ell$ the line supporting $e_L$. Now, all upper-hull edges $e_U$ intersecting $e_L$ at an appropriate angle fall into exactly one of these categories:
\begin{itemize}
    \item $a(e_U)$ lies to the left of $\ell$, $b(e_U)$ lies to the right of $\ell$ and $p^*(e_U)\in R_1(e_L)$,
    \item $a(e_U)$ lies to the left of $\ell$, $b(e_U)$ lies to the right of $\ell$ and $p^*(e_U)\in R_2(e_L)$,
    \item $a(e_U)$ lies to the right of $\ell$, $b(e_U)$ lies to the left of $\ell$ and $p^*(e_U)\in R_1(e_L)$,
    \item or $a(e_U)$ lies to the right of $\ell$, $b(e_U)$ lies to the left of $\ell$ and $p^*(e_U)\in R_2(e_L)$.
\end{itemize}

The number of upper-hull edges corresponding to each category can be counted by a range query which is the composition of $5$ half-plane queries on the $3$ defined liftings.

We can again use \cref{thm:range-query} as we did for $x_1^*$, this time with $k=5$, to query all lower-hull edges in $\bigO(n^{4/3}\log^{8} n(\log\log n)^{1/3})$ expected total time.
\end{proof}
We can thus use \cref{lemma:chan} to compute $x_2^*(\F)$ in the same asymptotic expected time. Note that Hopcroft's problem reduces to computing $x_2^*(\E)$ for a general set of oriented edges $\E$, and thus this runtime is likely close to optimal (see \cite{Erickson1996} for a lower bound in a quite general model of computation).

Putting everything together we get the following.
\begin{theorem} \label{thm:stab2D}
Let $(p,q)$ be an admissible pair for $d=2$ and let $\F$ be a family of compact convex polygons in the plane with a total of $n$ vertices and the $(p,q)$-property. Then we can compute a set of at most $p-q+1$ points stabbing $\F$ in $\bigO((p-q+1)n^{4/3}\log^{8} n(\log\log n)^{1/3} + np^2)$ expected time.
\end{theorem}

\subsection{Constant-size polygons}
If we restrict all polygons to have at most a constant number of vertices, then a simpler proof using Theorem 2.7 in \cite{AGARWAL2002} yields a slightly faster algorithm. Indeed, this theorem states the following.
\begin{theorem}
Let $\F$ be a family of compact convex polygons in the plane with a total of $n$ vertices. Then, we can count the number of pairs of polygons in $\F$ which intersect in $\bigO(n^{4/3}\log^{2+\epsilon} n)$ time, for any constant $\epsilon > 0$.
\end{theorem}

Using this, it is easy to prove the following.
\begin{theorem} \label{thm:right_intersection}
Given a vertical line $\ell$  and a family $\F$ of compact convex polygons in the plane with a total of $n$ vertices, we can decide whether $x^*(\F)$ lies to the right of $\ell$ in  $O(n^{4/3}\log^{2+\epsilon} n)$ time, for any constant $\epsilon > 0$.
\end{theorem}
\begin{proof}
We start by cutting all polygons along the vertical line $\ell$ and discarding the parts lying on the left of $\ell$ in linear time. 

The point $x^*(\F)$ lies to the right of $\ell$ if and only if there are two polygons which have a non-empty intersection but do not intersect on $\ell$. This can be decided by counting the number of pairwise intersecting polygons in $\bigO(n^{4/3}\log^{2+\epsilon} n)$ time, counting the number of pairwise intersecting polygons on $\ell$ (this can be done in $\bigO(n\log(n))$ time, see \cref{app:intersect-segment}, or one can use the same algorithm again), and then comparing these numbers. They differ if and only if some pair of polygons intersect exclusively to the right of $\ell$.

This whole procedure leads to an algorithm with a $\bigO(n^{4/3}\log^{2+\epsilon} n)$ runtime.
\end{proof}

Together with the modification mentioned in \cref{remark:constant_poly} and a straightforward application of \cref{lemma:chan}, this yields the following.
\begin{theorem}
Let $(p,q)$ be an admissible pair for $d=2$ and let $\F$ be a family $n$ of compact convex polygons in the plane with at most a constant number of vertices each and the $(p,q)$-property. Then we can compute a set of at most $p-q+1$ points stabbing $\F$ in $\bigO((p-q+1)n^{4/3}\log^{2+\epsilon}n + p^2\log p)$ expected time, for any constant $\epsilon > 0$.
\end{theorem}

\begin{remark}
At first glance it might seem like the constant-size assumption plays no essential role here for the $n^{4/3}\log^{2+\epsilon}n$ term in the runtime, and that nothing stops us from using the same approach in the general case. The trouble in the general case however comes in enforcing point 3 in \cref{lemma:chan}. When the polygons are not restricted in size, it might be impossible to create subproblems of appropriate size. This is not a problem in our proof as we are working with sets of oriented edges instead of sets of polygons.
\end{remark}

We finish this part by proving a lower bound on a restricted case of the problem solved in \cref{thm:right_intersection} (thus, a lower bound on the general case also).

\begin{theorem}
Given a vertical line $\ell$  and a family $\F$ of $n$ closed triangles in the plane which all intersect $\ell$, detecting whether two triangles intersect exclusively to the right of $\ell$ requires $\Omega(n\log n)$ time for the worst case in the algebraic decision tree model.
\end{theorem}

\begin{proof}
We prove the claim by a reduction from the Element Uniqueness Problem, which is known to have $\Theta(n\log n)$ time complexity in this model \cite{BenOr1983}. The Element Uniqueness Problem is the following: given an array of $n$ integers, test if they are all disctinct.

Let $a$ be an array of length $n$ representing an instance of the Element Distinctness Problem. Construct an instance of the problem we are interested with in $O(n)$ time in the following way.
\begin{itemize}
\item Let $\ell$ be the $y$-axis.
\item For every $k \in \{0,\ldots,n-1\}$, create a triangle with vertex coordinates $(0,n\cdot a[k]+k)$, $(1, 2\cdot a[k])$ and $(1, 2\cdot a[k] + 1)$.
\end{itemize}

\begin{figure}
\centering
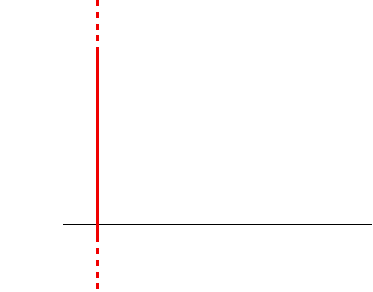
\caption{Right Intersection instance corresponding to the Element Uniqueness instance $[1,0,0]$}
\label{fig:reduction}
\end{figure}

All triangles trivially intersect the vertical line $\ell$ as they have a vertex lying on the $y$-axis.

Let $k,k' \in \{0,\ldots,n-1\}$, such that $k\neq k'$.

Suppose that $a[k] = a[k']$. Because the corresponding triangles share two vertices with each other they have a non-empty intersection. Moreover, this intersection lies entirely to the right of the $y$-axis, as the respective leftmost vertices of the triangles lie on this line and are distinct. 

Now suppose that $a[k] < a[k']$. Then it is easy to see that both triangles lie strictly on a different side of the line passing through the points with coordinates $(0, a[k]+n+1/2)$ and $(1,2\cdot a[k] + 3/2)$, and are thus disjoint.

Thus, one gets a positive response to the Element Distinctness instance if and only if one gets a positive response to this constructed instance. Coupled with the $O(n)$ runtime of the reduction, this concludes the proof. 
\end{proof}

The relevance of the requirement that all polygons intersect $\ell$ is that in the case of polygons with at most a constant number of vertices and having the $(p,q)$ property, we can reduce the problem in \cref{thm:right_intersection} to this case in $\bigO(np)$ time.

\subsection{The 3D case}
Here we deal with the analogous case for the 3D polyhedra. In this case the Helly number becomes $4$, and $x^*(\F)$ is defined in terms of triplets of convex polyhedra with non-empty common intersection.
\begin{theorem}
Let $(p,q)$ be an admissible pair for $d=3$ and let $\F$ be a family of compact convex polyhedra in $\R^3$ with a total of $n$ vertices and the $(p,q)$-property. We can compute a set of at most $p-q+1$ points stabbing $\F$ in $\bigO((p-q+1)n^{5/2}\log^{10} n(\log\log n)^{1/6} + np^3)$ expected time.
\end{theorem}

\begin{proof}
First, we compute the family $\F_2$ of all polyhedra obtained as pair-wise intersections of polyhedra in $\F$. This can be done in $\bigO(n^2)$ time using a linear-time algorithm to compute each intersection (for example the one in \cite{Chan2016}). Assuming the planes defining the polyhedra are in general position and none of the edges lie in a plane parallel to the $yz$-plane, the leftmost point in the intersection of three polyhedra in $\F$ is either
\begin{enumerate}
    \item the leftmost point of a polyhedron in $\F_2$ contained in the interior of a polyhedron of $\F$,
    \item the leftmost point of a polyhedron in $\F$ contained in the interior of a polyhedron of $\F_2$,
    \item the intersection of an edge of a polyhedron in $\F_2$ and the interior of a facet of a polyhedron in $\F$ (note that not all such intersections define the leftmost point of the intersection of three polyhedra in $\F$).
\end{enumerate}
The rightmost point corresponding to the first two cases can be found in expected time $\bigO(n^{9/4}\log^{\bigO(1)}n)$ by triangulating all polyhedra and then using the same methods as for the 2D case.
We now focus on the third case. In what follows, we only deal with triangular facets, as the general case reduces to this one by triangulating all facets in $\bigO(n^2)$ total time. We will preprocess the triangles and then query each edge to count the number of triangles which intersect it and define the leftmost point of a three-wise intersection of polyhedra. Testing if a segment intersects a triangle in $\R^3$ can be done by comparing the signs of three polynomial functions of degree three on the coordinates of the points (see for example \cite{Segura1998}). If $e=(p,q)$ is the segment we are testing against a triangular facet $f$, these polynomials take the form of the following determinant, where $(a,b)$ is one of the three edges of $f$:
\begin{align*}
D(e,f) = 
\begin{vmatrix}
    p_x & p_y & p_z & 1\\
    q_x & q_y & q_z & 1 \\
    a_x & a_y & a_z & 1 \\
    b_x & b_y & b_z & 1
\end{vmatrix}.
\end{align*}
It can be checked that testing $D(e,f) \geq 0$ can be expressed as testing if $P(f) \in H(e)$, where $P(f)$ is a point in $\R^5$ depending only on $f$ and $H(e)$ is a halfspace of $\R^5$ depending only on $e$. The most convenient way is perhaps to use the algorithm described in section 5.2 of \cite{Agarwal_survey} (which is a very slight variant of the algorithm in \cite{Agarwal1994}), which computes a linearization of smallest dimension and simply involves computing the rank of a matrix whose coefficients depend on those of the polynomial to linearize.

In \cite{AGARWAL2002}, it is further shown that when given an edge $e$ of a polyhedron and a facet $f$ of another polyhedron such that $e$ and $f$ intersect, testing if $e\cap f$ is the leftmost point of the intersection of the corresponding polyhedra can be expressed as testing if the outward normal vector of $f$ lies in the intersection of three halfspaces (depending on $e$ and the faces which support it). 

Using again \cref{thm:range-query}, this time in dimension $d=5$ and with $k=6$, we can preprocess the $\bigO(n^2)$ facets of $\F_2$ in $\bigO(n^{5/2}\log^{10} n(\log\log n)^{1/6})$ time such that we can query any oriented edge in $\bigO(n^{3/2}\log^{10} n(\log\log n)^{1/6})$ expected time. By querying all $\bigO(n)$ edges in $\F$ we can detect a leftmost point in the intersection of three polyhedra in $\F$ corresponding to the third case exists in $\bigO(n^{5/2}\log^{10} n(\log\log n)^{1/6})$ total expected time. By applying \cref{lemma:chan} as we did in the planar case, we thus get the result.
\end{proof}

\part*{Other conditions}
In this part we return to families of convex sets in the plane and investigate further conditions that are sufficient for the family to be stabbed by a fixed number of points.
In the whole part, we will consider the planar case, but we expect that with some more care the results can be extended to higher dimensions.

The main reason why we consider the planar case is the following: all proofs below use the algorithm given above, all we do is showing that the algorithm is also correct under some assumptions different than the $(p,q)$-condition.
In particular, we immediately get efficient algorithms for the below results.

\section{Holes}

The first condition we investigate considers \emph{holes} in the union of sets.
Let $\F$ be a finite family of convex sets in the plane and let $A\subset\mathbb{R}^2$ be the union of the sets in $\F$.
A hole is a bounded connected component of $\mathbb{R}^2\setminus A$.

There is an equivalent formulation of Helly's theorem due to Breen, which in the plane can be stated as follows: let $\F$ be a finite family of pairwise intersecting convex sets in the plane with the property that the union of any three of them has no hole, then $\F$ can be stabbed by a single point \cite{Breen, Montejano}.
We prove the following generalization of this result.

\begin{theorem}
\label{thm:holes}
Let $\F$ be a finite family of pairwise intersecting convex sets in the plane with the property that the union of any $k+3$ of them has at most $k$ holes, then $\F$ can be stabbed by $k+1$ points.
Further, the $k+1$ stabbing points can be chosen to lie on a single line.
\end{theorem}

\begin{proof}
Let $x^*(\F)$ be as above, that is, the lexicographical maximum among any lexicographical minimums in the intersection of two sets in $\F$, and let $F_1$ and $F_2$ be the sets in $\F$ that define $x^*(\F)$.
Consider the vertical line $v$ through $x^*(\F)$ and let $F'_1$ and $F'_2$ be the parts of $F_1$ and $F_2$, respectively, that lie to the left of $v$.
Let now $\ell$ be a line through $x^*(\F)$ which separates $F'_1$ and $F'_2$.
Such a line exists as otherwise $x^*(\F)$ would not be the lexicographical minimums in the intersection of $F'_1$ and $F'_2$.
Further note that any set in $\F$ that is not stabbed by $x^*(\F)$ must intersect $\ell$ to the left of $x^*(\F)$: there cannot be intersections exclusively to the right of $x^*(\F)$ by its definition, and as any set intersects $F_1$ and $F_2$, it follows from convexity that it must also intersect $\ell$.

Let now $\mathcal{R}$ be the family of remaining sets, that is, the sets not stabbed by $x^*(\F)$.
We claim that among any $k+1$ of them, some two intersect along $\ell$.
Indeed, if there were $k+1$ sets whose intersections with $\ell$ are pairwise disjoint, the union of these sets with $F_1$ and $F_2$ would have $k+1$ holes, which is excluded by the assumptions of the theorem.
We can thus apply the Hadwiger-Debrunner $(p,q)$-theorem on $\ell$ to stab $\mathcal{R}$ with $k$ points, so in total we have stabbed $\F$ with $k+1$ collinear points.
\end{proof}

Note that opposed to the proof of the Hadwiger-Debrunner $(p,q)$-theorem, we only compute $x^*(\F)$ once.
After this, we only need the 1D-variant, where stabbing points of $n$ intervals can easily be computed in time $O(n\log n)$.
We thus get the following.
\begin{proposition}
Let $\F$ be a family of compact convex polygons in the plane with a total of $n$ vertices and with the property that the union of any $k+3$ of them has at most $k$ holes. We can compute a set of at most $k+1$ collinear points stabbing $\F$ in $\bigO(n^{4/3}\log^{8} n(\log\log n)^{1/3})$ expected time.
\end{proposition}

\section{Number of intersections}

Another result to which our algorithm can be applied is the following, due to Montejano and Sober\'{o}n \cite{MontejanoSoberon}.

\begin{theorem}
Let $p,q,r,d$ be integers with $p>q>d$ and $r>\binom{p}{q}-\binom{p+1-d}{q+1-d}$.
Let $\F$ be a family of convex sets in $\mathbb{R}^d$ with the property that for any $p$ of them at least $r$ of their $q$-tuples intersect.
Then $\F$ can be stabbed by $p-q+1$ points.
\end{theorem}

In the plane, their proof is analogous to our proof of Theorem \ref{thm:holes}, that is, after removing all sets already stabbed by $x^*(\F)$, the assumptions on the set can be used to show that the intersections of the remaining sets with the dividing line $\ell$ satisfy the $(p-2,q-1)$-property.
In particular, the same algorithm to compute the stabbing points is correct, and we get the following.

\begin{proposition}
Let $p,q,r$ be integers with $p>q>2$ and $r>\binom{p}{q}-\binom{p-1}{q-1}$.
Let $\F$ be a family of compact convex polygons in the plane with a total of $n$ vertices and with the property that for any $p$ of them at least $r$ of their $q$-tuples intersect. We can compute a set of at most $p-q+1$ collinear points stabbing $\F$ in $\bigO(n^{4/3}\log^{8} n(\log\log n)^{1/3})$ expected time.
\end{proposition}

\part*{Ordered-Helly systems}

\section{An axiomatic approach}
We have seen in the last chapter an approach that uses Helly's theorem to prove the Hadwiger-Debrunner theorem.  A natural path forward is to try adapting the method to other contexts where Helly-type theorems exist and prove corresponding $(p,q)$ versions. By taking a close look at our proof for the Hadwiger-Debrunner $(p,q)$-theorem, we can observe that we made use of relatively few properties of compact convex sets. These properties are (i) closure under intersection, (ii) existence of a lexicographically minimum point, (iii) Helly's theorem as well as (iv) the fact that the set of all points lexicographically smaller than some point $y$ is convex (this last property doesn't appear explicitly but is needed in the proof of \cref{lemma:lexmin}). We define \emph{Ordered-Helly systems} as set systems with analogous properties.
\begin{definition}[Ordered-Helly system]$ $\\
An Ordered-Helly system $\mathfrak{S}$ is a tuple $(\mathcal{B},\mathcal{C},\mathcal{D},h,\preceq)$ consisting of
\begin{itemize}
\item a set $\mathcal{B}$, called the \emph{base-set};
\item a family $\mathcal{C}$ of subsets of $\mathcal{B}$,  whose members are called \emph{convex sets} or \emph{$\mathfrak{S}$-convex sets};
\item a family $\mathcal{D} \subset \mathcal{C}$ whose members are called \emph{compact sets} or \emph{$\mathfrak{S}$-compact sets};
\item a total order $\preceq$ on $\mathcal{B}$;
\item and an integer $h\geq 2$, called the \emph{Helly-number} of $\mathfrak{S}$
\end{itemize} 
with the following properties.
\begin{enumerate}
\item (Intersection closure)\\$\mathcal{D}$ is closed under intersection, i.e. for all $S_1, S_2 \in \mathcal{D}$ we have $S_1\cap S_2 \in \mathcal{D}$.
\item (Attainable minimum)\\For all non-empty $S \in \mathcal{D}$, there exists $x\in S$ such that for all $y \in S$, $x\preceq y$. This $x$ is necessarily unique and we call $x$ the $\preceq$-min of $S$. We define the $\preceq$-max of a set similarly, if it exists.
\item (Convex order)\\For all $t \in \mathcal{B}$, we have $\{x\in \mathcal{B}\mid x \preceq t\ \text{and}\ x \neq t\} \in \mathcal{C}$.
\item (Helly property)\\If $\F \subset \mathcal{C}$ is a finite subset of $n\geq h$ sets of $\mathcal{C}$ such that every subfamily of $h$ members of $\F$ has a non-empty common intersection, then all members of $\F$ have a non-empty common intersection.
\end{enumerate}

\end{definition}

As was stated earlier, this structure is enough to carry out a similar proof as the one we saw for the Hadwiger-Debrunner theorem. Call a pair $(p,q)$ of integers $h$-admissible if $p\geq q \geq h$ and $(h-2)p<(h-1)(q-1)$. Then we have the following.
\begin{theorem} \label{thm:sysHD}
Let $\mathfrak{S} =(\mathcal{B},\mathcal{C},\mathcal{D},h,\preceq)$ be an Ordered-Helly system. Let $(p,q)$ be an $h$-admissible pair of integers. Let $\F$ be a finite family of non-empty sets of $\mathcal{D}$. Suppose that $\F$ has the $(p,q)$-property. Then there exist $p-q+1$ elements of $\mathcal{B}$ stabbing $\F$.
\end{theorem}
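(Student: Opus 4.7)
The plan is to mimic the proof of \cref{thm:HD} with the substitutions $d+1 \mapsto h$ and $d \mapsto h-1$, letting the axioms of the Ordered-Helly system take over the roles played by Helly's theorem, compactness, and the lexicographic order in the Euclidean setting.

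First I would establish the abstract analogue of \cref{lemma:lexmin}: for any finite subfamily $\F \subset \mathcal{D}$ of at least $h$ sets with non-empty intersection $I$, writing $x$ for the $\preceq$-min of $I$ (which exists by intersection closure together with attainable minimum), there is a subfamily $\mathcal{H} \subset \F$ of size $h-1$ with $x$ as the $\preceq$-min of $\bigcap \mathcal{H}$. The original argument transfers directly: the convex-order axiom makes $S_x := \{y \in \mathcal{B} : y \prec x\}$ a member of $\mathcal{C}$, it is disjoint from $I$, the Helly property applied to $\F \cup \{S_x\} \subset \mathcal{C}$ yields $h$ sets with empty intersection, $S_x$ must be among them since $\bigcap \F \neq \emptyset$, and the remaining $h-1$ form $\mathcal{H}$; the $\preceq$-min of $\bigcap \mathcal{H}$ is then forced to equal $x$ because it cannot lie strictly below $x$ (disjointness with $S_x$) nor strictly above it ($x$ itself lies in $\bigcap \mathcal{H}$).

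With this lemma in hand, I would run the same induction on $p$ as in \cref{thm:HD}. The base case $p = q = h$ is the Helly axiom, and the reductions to $(p-k, q-k)$ when either $p = q > h$ or $(h-2)p = (h-1)(q-1) - k - 1$ for some $k \geq 1$ require only routine verification that the reduced pair remains admissible under the new inequality $(h-2)p < (h-1)(q-1)$. In the main case $p > q$ with $(h-2)p = (h-1)(q-1) - 1$, I would define $x^*(\F)$ as the $\preceq$-maximum, over all $(h-1)$-subfamilies $\mathcal{S} \subset \F$ of non-empty intersection, of the $\preceq$-min of $\bigcap \mathcal{S}$ (which is well-defined because $\F$ is finite and the $(p,q)$-property forces at least one such subfamily to exist), pick a witness $\mathcal{G}$ of size $h-1$, set $\mathcal{R} := \{C \in \F : x^*(\F) \notin C\}$, and use the lemma together with the maximality of $x^*(\F)$ to show $S \cap \bigcap \mathcal{G} = \emptyset$ for every $S \in \mathcal{R}$.

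From there I would handle the two sub-cases $|\mathcal{R}| \geq p - h + 1$ and $|\mathcal{R}| < p - h + 1$ by the same combinatorial counting as in the Euclidean proof: in the first, $\mathcal{R}$ inherits a $(p-h+1, q-h+2)$-property and this reduced pair is itself admissible (a short calculation using $p > q$ and $(h-2)p = (h-1)(q-1) - 1$ gives $q \geq 2h - 2$, so that $q - h + 2 \geq h$), and the induction hypothesis stabs $\mathcal{R}$ with $p - q$ points; in the second, padding $\mathcal{R} \cup \mathcal{G}$ with further members of $\F$ up to size $p$ and invoking the $(p,q)$-property extracts a subfamily of $\mathcal{R}$ stabbable by a single point, with the trivial stabbing of the remaining sets closing the budget at $p - q$. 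I expect no real conceptual obstacle; the only subtlety specific to the axiomatic setting is the careful bookkeeping of which axioms apply to which sets at each step — in particular, ensuring that intersection closure keeps every $\bigcap \mathcal{H}$ and $\bigcap \mathcal{G}$ inside $\mathcal{D}$ so attainable minimum can be invoked, and that $S_x$ enters the Helly property through $\mathcal{C}$ rather than $\mathcal{D}$.
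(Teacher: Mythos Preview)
Your proposal is correct and follows essentially the same approach as the paper: the paper first proves the abstract analogue of \cref{lemma:lexmin} (stated there as \cref{lemma:sysmin}) exactly as you describe, and then carries out the same induction on $p$ with the same case split on $|\mathcal{R}|$ and the same verification that $(p-h+1,q-h+2)$ is $h$-admissible. The bookkeeping you flag about $S_x$ living in $\mathcal{C}$ while the intersections $\bigcap\mathcal{H}$ stay in $\mathcal{D}$ is precisely the role the axioms are designed to play, and the paper handles it identically.
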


It should be mentioned that the existence of a Helly number alone is not enough to show such a result, see \cite{Alon2002} for an example of a set system with Helly number 2 but no general $(p,q)$-theorem.

To prove this theorem, we will make use of an analogous to \cref{lemma:lexmin}. Let us state and prove this analogous lemma.

\begin{lemma}
\label{lemma:sysmin}
Let $\mathfrak{S} = (\mathcal{B},\mathcal{C},\mathcal{D},h,\preceq)$ be an Ordered-Helly system.
Let $\F\subset \mathcal{D}$ be a family of $n\geq h$ sets in $\mathcal{D}$ such that $I := \bigcap\F$ is non-empty. Let $x$ be the $\preceq$-min of $I$ (which exists by the properties of intersection closure and attainable minimum). Then, there exists a subfamily $\mathcal{G} \subset \F$ of size $h-1$ such that $x$ is the $\preceq$-min of $\mathcal{G}$.
\end{lemma}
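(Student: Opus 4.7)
The plan is to mimic the proof of \cref{lemma:lexmin} step by step, substituting each ingredient by its counterpart in the Ordered-Helly system $\mathfrak{S}$. The key observation is that the original proof used only four features of the Euclidean setting: lex-smaller points form a convex set, convex compact sets are closed under intersection, non-empty compact sets attain their lex-min, and Helly's theorem holds; all four have been axiomatised as the properties of $\mathfrak{S}$.

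First I would define $S_x := \{y \in \mathcal{B} \mid y \preceq x\ \text{and}\ y \neq x\}$. By the \emph{convex order} axiom, $S_x \in \mathcal{C}$. Since $x$ is the $\preceq$-min of $I = \bigcap \F$, no element of $I$ lies strictly below $x$, so $S_x \cap I = \emptyset$. Hence the family $\F' := \F \cup \{S_x\} \subset \mathcal{C}$ has empty total intersection, and $\lvert \F' \rvert = n+1 \geq h$.

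Applying the contrapositive of the \emph{Helly property} to $\F'$ yields a subfamily of $h$ members with empty common intersection. This subfamily must contain $S_x$, since any $h$ members chosen entirely from $\F$ have intersection containing the non-empty set $I$. Thus there is $\mathcal{G} \subset \F$ with $|\mathcal{G}| = h-1$ and $\bigcap \mathcal{G} \cap S_x = \emptyset$. By \emph{intersection closure} (applied inductively, trivially if $h=2$), $\bigcap \mathcal{G} \in \mathcal{D}$; it is non-empty because it contains $I \ni x$, so by \emph{attainable minimum} it admits a $\preceq$-min, call it $x_\mathcal{G}$.

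To finish, I would argue in both directions: $x_\mathcal{G} \preceq x$ because $x \in \bigcap \mathcal{G}$; and $x_\mathcal{G} \notin S_x$ because $\bigcap \mathcal{G}$ is disjoint from $S_x$, which together with the \emph{totality} of $\preceq$ rules out $x_\mathcal{G} \prec x$. Hence $x_\mathcal{G} = x$, as required. The only real subtlety, and the point I would flag as the main obstacle, is making sure that the contrapositive of the Helly property produces a bad subfamily \emph{including} $S_x$ and then extracting $\mathcal{G}$ of the correct size $h-1$; this is where the hypothesis that $\bigcap \F$ is non-empty is essential, and it is also the reason why the conclusion involves $h-1$ rather than $h$ sets.
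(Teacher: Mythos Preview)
Your proposal is correct and follows the paper's proof essentially verbatim: define $S_x$, note it is convex and disjoint from $I$, apply the Helly property to $\F\cup\{S_x\}$ to extract a bad $h$-subfamily that must include $S_x$, and compare the $\preceq$-min of the remaining $h-1$ sets with $x$. The only differences are cosmetic---you make explicit the size check $|\F'|\geq h$, the inductive use of intersection closure, and the role of totality---none of which departs from the paper's argument.
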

\begin{proof} Let $\F$, $I$ and $x$ be as specified in the statement.\\ 
Let $S_x$ denote ${\{y\in \mathcal{B}\mid y \preceq x\ \text{and}\ y \neq x\}}$, which is a $\mathfrak{S}$-convex set by the property of convex order. It is also disjoint from $I$ as $\preceq$ is a total order. By the Helly property, there exists a subfamily of $h$  members of $\F\cup\{S_x\}$ with an empty common intersection. These members have to include $S_x$, as all members of $\F$ have a non-empty common intersection. Let $\mathcal{G} \subset \F$ be the family consisting of the remaining $h-1$ sets and let $x_\mathcal{G}$ be the $\preceq$-min of $I':=\bigcap\mathcal{G}$ (which is a non-empty $\mathfrak{S}$-compact set). We know that $x_\mathcal{G} \preceq x$ because $x \in I'$. If we now suppose $x_\mathcal{G} \neq x$ this implies that $x_\mathcal{G} \in S_x$ and contradicts the fact that $I'\cap S_x = \emptyset$. Thus, $x_\mathcal{G} = x$.
\end{proof}

We are now ready to prove \cref{thm:sysHD}.

\begin{proof}[of \cref{thm:sysHD}]
Let $\mathfrak{S} = (\mathcal{B},\mathcal{C},\mathcal{D},h,\preceq)$ be an Ordered-Helly system.

Let $(p,q)$ be an $h$-admissible pair, and let $\F$ be a family of sets of $\mathcal{D} $ with the $(p,q)$-property. 

We will reason by induction on $p$, the base case being $p=q=h$ which is true by the Helly property of the system. If $p=q>h$, then $\F$ also has the $(h,h)$ property (as having the $(p,q)$-property implies having the $(p-1,q-1)$-property) and the results again follows from the Helly property of the system.

So suppose that $p>q$ and that the result is true for any $h$-admissible pair $(p',q')$ with $p'<p$. 

If $(h-2)p=(h-1)(q-1)-k-1$ for $k\geq 1$, then notice that $(p-k,q-k)$ is an $h$-admissible pair, as in that case $(h-2)(p-k) = (h-1)(q-k-1)-1$ which together with $p>q$ also implies that $q-k \geq h$. Thus the result follows from the induction hypothesis. 

It now remains to consider the case where $p>q$ and $(h-2)p = (h-1)(q-1)-1$. 

To do so, let us construct an element $b^*(\F)$ as follows: 
\begin{itemize}
\item For every non-empty subfamily $\SIt \subset \F$ of $h-1$ $\mathfrak{S}$-convex sets with non-empty intersection, let $b_{\SIt}$ be the $\preceq$-min of $I_{\SIt} = \bigcap_{C\in\SIt}C$, which exists by the properties of intersection closure and attainable minimum.
\item Let $b^*(\F)$ be the $\preceq$-max element among all such $b_{\SIt}$'s.
\end{itemize}

Let $\mathcal{G}$ be one of the families defining $b^*(\F)$, that is, $\mathcal{G} \subset \F'$ is a subfamily of $h-1$ sets which have $b^*(\F)$ as the $\preceq$-min of their intersection. 

To establish the theorem, it is enough to show that by choosing $b^*(\F)$ as one of our stabbing elements, we can stab all the remaining sets (i.e. those which do not contain $b^*(\F)$) with $p-q$ elements. Let $\mathcal{R} = \{S \in \F\ |\ b^*(\F) \not\in S\}$ be the set of remaining sets.

Let us argue that for any $S\in \mathcal{R}$, $S \cap (\bigcap\mathcal{G})$ is empty. To do so, suppose it was not, and let $y$ be the $\preceq$-min of that intersection. By \cref{lemma:sysmin}, $y$ is the $\preceq$-min of the intersection of $h-1$ sets in $\F$. Moreover, by definition of $\mathcal{R}$ and $\mathcal{G}$, $b^*(\F) \prec y$. This contradicts the definition of $b^*(\F)$. Thus, $S \cap (\bigcap\mathcal{G})$ is empty.

Two cases arise:
\begin{enumerate}
\item $(|\mathcal{R}| \geq p-h+1)$ We show that $\mathcal{R}$ has the $(p-h+1,q-h+2)$-property. Indeed, choose any $p-h+1$ members from $\mathcal{R}$ together with the $h-1$ members from $\mathcal{G}$. We know from the $(p,q)$-property of $\F$ that there exists a subfamily $\mathcal{E} \subset \mathcal{R}\cup \mathcal{G}$ of size $q$ whose members have a non-empty common intersection. $\mathcal{E}$ cannot contain all elements of $\mathcal{G}$, as $q>h-1 = |\mathcal{G}|$ and the intersection of all members of $\mathcal{G}$ together with any member of $\mathcal{R}$ is empty. Thus, $\mathcal{E}$ contains at least $q-h+2$ members of $\mathcal{R}$. This shows that $\mathcal{R}$ has the $(p-h+1,q-h+2)$-property. Notice that with the assumptions $p>q$ and $(h-2)p = (h-1)(q-1)-1$ which we are working under, $(p-h+1, q-h+2)$ is admissible. Thus, by the induction hypothesis, $\mathcal{R}$ can be stabbed with $p-h+1 - (q-h+2) + 1 = p-q$ elements of $\mathcal{B}$.
\item $(|\mathcal{R}| < p-h+1)$ In this case, choose $\mathcal{R}$ as a whole together with $\mathcal{G}$ and $p-h+1-|\mathcal{R}|$ other members of $\F$. By the same reasoning as in case 1., there exists a subset of $q-(h-2+p-h+1-|\mathcal{R}|) = |\mathcal{R}| + 1 + q - p$ members of $\mathcal{R}$ which intersect and can thus be stabbed by a single element of $\mathcal{B}$. The remaining $|\mathcal{R}| - (|\mathcal{R}| + 1 + q - p) = p-q-1$ sets can trivially be stabbed by $p-q-1$ elements.
\end{enumerate} 
Thus, $\mathcal{R}$ can be stabbed by $p-q$ elements, which implies that $\F$ can be stabbed by $p-q+1$ elements and by induction, concludes the proof.
\end{proof}

\begin{remark-nn}
We could relax the properties of an Ordered-Helly system somewhat and still be able to prove this theorem. In particular, properties $1.$, $2.$ and $3.$ could be replaced by the two following properties, which already hold in an Ordered-Helly system.
\begin{itemize}
\item For all $1\leq k \leq h$ and all $S_1, S_2, \ldots S_{k} \in \mathcal{D}$ with a non-empty common intersection there exists some $x\in \bigcap_i S_i$ such that for all $y\in \bigcap_i S_i$, $x\preceq y$.
\item If $\F \subset \mathcal{D}$ is a family of $h$ compact sets and $x$ is the $\preceq$-min of $\bigcap \F$, then there exists some subfamily $G$ of size $h-1$ such that $x$ is also the $\preceq$-min of $\bigcap\mathcal{G}$.
\end{itemize}

In fact, none of the conditions $1.$ to $3.$ are necessary in the sense that there exist families of sets violating all three for which a Hadwiger-Debrunner type theorem does hold. Consider for example the family of all open disks in the plane with the lexicographical order on points. Neither intersection closure, attainable minimum nor convex order holds in this case, but of course the Hadwiger-Debrunner theorem still applies as these are a special case of convex sets.

However, condition $3.$ (convex order) is in fact necessary in the sense that dropping it while maintaining the others unchanged would make \cref{thm:sysHD} false. Otherwise, we could for example prove that a family of axis aligned rectangles in the plane with the $(3,2)$ property can be stabbed with two points. This is false, as the following example illustrates.

\begin{figure}[h]
\centering
\scalebox{1}{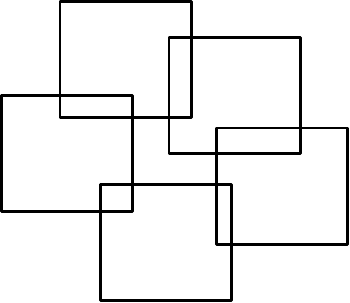}
\caption{Counterexample to the claim that axis aligned rectangles in the plane with the $(3,2)$-property can always be stabbed with two points.}
\label{fig:rectangles}
\end{figure}

\end{remark-nn}

\section{Computing stabbing points in an Ordered-Helly system}

The proof we saw once again leads to an algorithm computing stabbing elements of a family of $\mathfrak{S}$-compact sets with the $(p,q)$-property for an admissible pair $(p,q)$, given we have access to some oracles. We will write the run-times in terms of the description complexity of a set, which depends on the exact context. Thus, for a $\mathfrak{S}$-compact set $S$, let $\#S$ denote this complexity (of at least $1$), and for a family $\F$ of $\mathfrak{S}$-compact sets, let $\#\F := \sum_{S\in\F}\#S$. 

Consider an Ordered-Helly system $\mathfrak{S} = (\mathcal{B},\mathcal{C},\mathcal{D},h,\preceq)$ (for a constant $h$) and suppose we have access to the following oracles.
\begin{itemize}
\item For two elements $b_1, b_2 \in \mathcal{B}$, we can test $b_1 \preceq b_2$ in constant time.
\item For a family of at most $h-1$ $\mathfrak{S}$-compact sets $\F \subset \mathcal{D}$, we can test if the sets in $\F$ have a common intersection and compute the $\preceq$-min of that intersection if it is non-empty in $\bigO(\#\F)$ time.
\item For a $\mathfrak{S}$-compact set $S \in \mathcal{D}$ and a point $b \in \mathcal{B}$ we can test if $b\in S$ in $\bigO(\#S)$ time.
\end{itemize}

We could naturally consider other run-times for these oracles. We only specify them in order to showcase an example of run-time analysis which is tighter than if we had worked with general run-times and swapped in concrete functions afterwards (and matches the case of convex polytopes in $\R^d$ for small $d$). Other run-times might require other specialized forms of analysis.

Now, let $\F \subset \mathcal{D}$ be a family of $\mathfrak{S}$-compact sets. Among all points in $\mathcal{B}$ defined as the $\preceq$-min of the intersection of $h-1$ sets in $\F$, let $b^*(\F)$ be the $\preceq$-max of those. 

Let us state two lemmas which will be useful for our algorithm. The proofs are identical to those of \cref{lemma:d-wise-intersect} and \cref{lemma:base-stab-poly} and we repeat them only for the readers convenience.
\begin{lemma}\label{lemma:h-wise-intersect}
We can compute $b^*(\F)$ in $\bigO(\#\F\cdot|\F|^{h-2}))$ time.
\end{lemma}
\begin{proof}
We can compute $b^*(\F)$ by testing for intersection in every subfamily $\mathcal{G}$ of $\F$ of size $h-1$ and computing the $\preceq$-min of that intersection if it is non-empty. 

If we consider some fixed subfamily $\mathcal{G}$, the computation for that subfamily will cost at most $c\cdot\#\mathcal{G}$ for some constant $c$ which doesn't depend on $\mathcal{G}$. Charge this cost to the sets $S \in \mathcal{G}$ by attributing a cost of  $c\cdot\#S$ to a set $S$. 

Now, consider the cost charged to some fixed set $S$ for the whole computation. As $S$ appears in no more than $|\F|^{h-2}$ subfamilies of size $h-1$, its total cost charge is upper bounded by $c\cdot\#S\cdot |\F|^{h-2}$. Summing across all sets $S\in\F$, we get a total cost of $\bigO(\#\F\cdot|\F|^{h-2}))$.
\end{proof}

\begin{lemma}\label{lemma:baseStab}
Suppose there exists some subfamily $\mathcal{G} \subset \F$ of size $k+1$ such that all sets in $\mathcal{G}$ have a common intersection, where $k$ is a known parameter. We can compute $|\F|-k$ points in $\mathcal{B}$ stabbing $\F$ in  $\bigO(\#\F\cdot |\F|^{h-1})$ time.
\end{lemma}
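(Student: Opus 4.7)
The plan is to split on whether $k$ is small ($k \leq h-1$) or large ($k \geq h$), since in the latter case Lemma~\ref{lemma:sysmin} lets us transfer the $\preceq$-min of the size-$k$ intersection down to some size-$(h-1)$ subfamily, which is small enough to enumerate. In both cases, after producing one element that stabs at least $k$ sets of $\F$, I would trivially stab each of the remaining at most $|\F|-k$ sets by invoking the intersection oracle on the singleton to extract its $\preceq$-min, yielding at most $|\F|-k+1$ stabbing points in total.

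In the small case I would enumerate every subfamily of $\F$ of size exactly $k$ and call the intersection oracle on each. By the hypothesis at least one such subfamily has non-empty intersection, and its $\preceq$-min serves as the first stabbing element. The cost of the enumeration follows from the charging argument used in the previous lemma: each set $S\in\F$ participates in $\bigO(|\F|^{k-1})$ subfamilies and is charged $\bigO(\#S)$ per subfamily, for a total of $\bigO((\#\F)^{k})$, which is bounded by $\bigO((\#\F)^{h-1})$ because $k\leq h-1$.

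In the large case I would compute, for every subfamily $\mathcal{H}\subset\F$ of size $h-1$ with non-empty intersection, the $\preceq$-min of $\bigcap\mathcal{H}$; by the previous lemma, producing all of these candidate points costs $\bigO((\#\F)^{h-1})$. For each of the $\bigO(|\F|^{h-1})$ candidates $b$, I would then scan through $\F$ using the membership oracle to count the sets containing $b$, at a cost of $\bigO(\#\F)$ per candidate and $\bigO(\#\F\cdot |\F|^{h-1})$ in total, and keep a candidate that stabs the most sets. Applying Lemma~\ref{lemma:sysmin} to the promised $\mathcal{G}$ (of size $k\geq h$) shows that the $\preceq$-min of $\bigcap\mathcal{G}$ is itself the $\preceq$-min of some size-$(h-1)$ subfamily of $\mathcal{G}$, hence is among our candidates and stabs at least the $k$ sets of $\mathcal{G}$. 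So some candidate is guaranteed to stab $\geq k$ sets, and we pick it.

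The subtlety to handle carefully is that the subfamily $\mathcal{G}$ is only promised to exist and is not handed to us explicitly, which is exactly why the large case must enumerate size-$(h-1)$ candidates and then verify each against all of $\F$, rather than doing a direct lookup; this is where the $\bigO(\#\F\cdot|\F|^{h-1})$ verification term appears alongside the $\bigO((\#\F)^{h-1})$ candidate-generation term. Summing the costs of the two phases in whichever case applies, together with the $\bigO(\#\F)$ cost of the individual stabbings of the remainder, yields the claimed $\bigO((\#\F)^{h-1} + \#\F\cdot |\F|^{h-1})$ total runtime.
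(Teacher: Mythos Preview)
Your proposal is correct and essentially identical to the paper's own proof: both split on $k\le h-1$ versus $k\ge h$, enumerate size-$k$ subfamilies in the small case, and in the large case generate the $\preceq$-min of every size-$(h-1)$ subfamily, invoke \cref{lemma:sysmin} to guarantee one of these candidates stabs at least $k$ sets, and verify each candidate against all of $\F$ via the membership oracle. Your explicit remark that $\mathcal{G}$ is only promised to exist, and that this is precisely why the verification term $\bigO(\#\F\cdot|\F|^{h-1})$ is needed, is a nice clarification the paper leaves implicit.
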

\begin{proof}
If $k+1\leq h-1$, then we can test every subfamily of size $k+1$ for common intersection and compute its $\preceq$-min for a total cost of $\bigO( \#\F\cdot |\F|^{k+1}) \leq \bigO( \#\F\cdot |\F|^{h-1})$.

If $k+1> h-1$, then we know from \cref{lemma:sysmin} that the $\preceq$-min of the intersection of all sets in $\mathcal{G}$ is also the $\preceq$-min of the intersection of some $h-1$ sets in $\F$. Thus, one can find a point stabbing at least $k+1$ sets by computing the $\preceq$-min point for each subfamily of size $h-1$ (in $\bigO(\#\F|\F|^{h-1})$ time) and counting the number of sets intersected for each of the $\bigO(|\F|^{h-1})$ computed points (in $\bigO(\#\F|\F|^{h-1})$ time as well).

As soon as we find a point $b$ stabbing at least $k+1$ sets, we return $b$ along with the $\preceq$-min of every set in $\F$ which is not stabbed by $b$.
\end{proof}

With these algorithms, we can now prove the following.

\begin{theorem}\label{thm:stabOH} 
Let $\F$ be a family of $\mathfrak{S}$-compact sets with the $(p,q)$-property. Suppose we have access to the relevant oracles described above. We can compute a set of at most $p-q+1$ elements stabbing $\F$ in time
\[\bigO((p-q+1)(\#\F)^{h-1} + (\#\F)p^{h-1}).\]
\end{theorem}
\begin{proof}$ $\\
Consider the following algorithm.
\begin{enumerate}
\item Reduce $p$ and $q$ (as done in the proof of \cref{thm:sysHD}) to reach the case where $p=q=h$ or the case where $p>q$ and $(h-2)p = (h-1)(q-1)-1$. 
\item Construct an element $b^*(\F)$ defined as in the proof. We choose it as one of our stabbing points.
Now, remove from $\F$ all the sets which are stabbed by this point. If there are any remaining sets then either $|\F| \geq p-h+1$ and $\F$ satisfies the $(p-h+1,q-h+2)$-property, where $(p-h+1,q-h+2)$ is $h$-admissible, or $\F$ consists of $p-q+k$ sets, $k < q-h+1$, where some $k+1$ of them have a common intersection.
\item In the first case, we can continue inductively.
\item In the second case we can trivially stab the remaining sets using $p-q$ elements.
\end{enumerate}

$\bullet$ \textit{Correctness:}\\
The correctness of the algorithm follows from the proof of \cref{thm:sysHD}. The only detail that needs some additional scrutiny is the correctness for the base case $p=q=h$. Notice that in this case all sets have a common intersection and \cref{lemma:sysmin} ensures that $b^*(\F)$ stabs the whole family $\F$.

$\bullet$ \textit{Runtime:}\\
We know that Step 1 can be done in $\bigO((\#\F)^{h-1})$ time and has to be done at most $(p-q+1)$ times. We also know from \cref{lemma:baseStab} that Step 4 can be done in $\bigO((\#\F)^{h-1} + (\#\F)p^{h-1})$ time. This step is only done once.

Thus, we get a total runtime of 
\[\bigO((p-q+1)(\#\F)^{h-1} + (\#\F)p^{h-1}).\]
\end{proof}

With access to the right oracle, we could for example apply \cref{lemma:chan} analogously to what we did for convex polytopes in the Euclidean setting and get the corresponding speedup.

\part*{Examples of Ordered-Helly systems}

Until now, the only Ordered-Helly system we have seen is the one corresponding to compact convex sets in $\R^d$. We will see that this structure does have some other interesting representatives and is not restricted to this single example (in which case the usefulness of introducing it would have been doubtful).

\section{Hadwiger-Debrunner type results for subsets of \texorpdfstring{$\R^d$}{Rd}}

Let us start by stating and proving some Hadwiger-Debrunner type results for sets which are defined as the intersection of a compact convex set in $\R^d$ with a subset $S\in \R$. To do so define the $S$-Helly number as follows.

\begin{definition}
Let $S$ be a subset of $\R^d$. We call $S$-Helly number and write $h(S)$ the smallest integer $k>0$ such that the following holds:\\
Given a finite family $\F$ of convex sets in $\R^d$, if in every subfamily of $\F$ of size $k$ all sets share a point in $S$, then all sets in $\F$ share a point in $S$.\\
If no such $k$ exists, then $h(S) = \infty$.
\end{definition}

One of the first results concerning $S$-Helly numbers was discovered by Doignon \cite{Doignon1973}, and is the case $S=\Z^d$.
\begin{theorem}[Doignon]
Let $\F$ be a family of $n\geq 2^d$ convex sets in $\R^d$. If in every subfamily of $\F$ of size $2^d$ all sets share a point in $\Z^d$, then all sets in $\F$ share a point in $\Z^d$.
\end{theorem}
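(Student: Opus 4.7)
The plan is to proceed by induction on $n$. The base case $n = 2^d$ is immediate from the hypothesis. For the inductive step, assume $n > 2^d$ and that the claim holds for any family of $n-1$ convex sets satisfying the same hypothesis. Applying the induction hypothesis to each sub-family $\F \setminus \{C_i\}$, we obtain an integer point $z_i \in \bigcap_{j \neq i} C_j \cap \Z^d$ for every $i$. If any such $z_i$ also lies in $C_i$, we are done; so we may assume $z_i \notin C_i$ for all $i$. The points $z_1, \ldots, z_n$ are then pairwise distinct, since $z_i \in C_j$ while $z_j \notin C_j$ whenever $i \neq j$.

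The core of the argument is a pigeonhole step on parity classes modulo $2$. Since $n > 2^d = |(\Z/2\Z)^d|$, there exist two indices $a \neq b$ such that $z_a$ and $z_b$ have coordinate-wise equal parity, so their midpoint $m := (z_a + z_b)/2$ is itself an integer point. By convexity, $m \in C_k$ for every $k \notin \{a,b\}$. If one can furthermore show that $m \in C_a \cap C_b$, then $m$ is a common integer point of $\F$, contradicting our standing assumption; the induction step thus cannot occur and we must have $n \leq 2^d$, which is the desired conclusion.

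The main obstacle is exactly this last step: convexity alone does not force $m$ into $C_a$ or $C_b$, since $z_a \notin C_a$ and $z_b \notin C_b$. Handling it requires a non-arbitrary choice of the $z_i$ together with a geometric argument. One concrete way is to choose each $z_i$ as an extremal integer point in $\bigcap_{j \neq i} C_j$ with respect to an auxiliary direction (first adjoining to $\F$ a large integer-vertex box so that every sub-family intersection is bounded), and then use the separating hyperplanes between $z_i$ and $C_i$ to rule out the possibility that the midpoint escapes either $C_a$ or $C_b$. Alternatively, one can invoke the classical Bell--Scarf structure theorem on maximal lattice-free convex sets in $\R^d$, whose facet count is at most $2^d$, to directly cap the size of any minimal counterexample. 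Either route places the geometric/structural step, rather than the induction or the pigeonhole itself, as the technically delicate heart of the proof.
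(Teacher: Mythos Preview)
The paper does not prove Doignon's theorem; it simply quotes it (with a citation to Doignon's 1973 paper) as a known result on the way to the mixed-integer Hadwiger--Debrunner theorem. So there is nothing in the paper to compare your attempt against.

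Regarding the attempt itself: you have the standard opening (reduce to a minimal family, produce witnesses $z_i$, pigeonhole on parity in $(\Z/2\Z)^d$, take the midpoint $m\in\Z^d$), and you correctly flag the real difficulty --- convexity gives $m\in C_j$ for all $j\notin\{a,b\}$ but says nothing about $m\in C_a$ or $m\in C_b$, and without at least one of these there is no contradiction. However, what you then offer are two unexecuted plans rather than a proof. The first (``extremal $z_i$ in an auxiliary direction plus separating hyperplanes'') is too vague to assess: you do not say which extremal choice, what the hyperplanes separate, or how any of this forces $m$ into $C_a\cup C_b$; I do not see how to make that sketch go through as stated. The second (invoke Bell--Scarf) is close to circular: the Bell and Scarf bounds on maximal lattice-free convex bodies are essentially a reformulation of Doignon's theorem and are typically proved via the very parity-class argument you are trying to complete, so citing them here begs the question (and even granting Bell--Scarf, the deduction of Doignon from it is not the one-liner you suggest). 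There is also a small logical slip at the end of your second paragraph: the contradiction you are aiming for shows that the standing assumption ``no common lattice point'' fails, i.e.\ it yields the desired conclusion directly for the given $n>2^d$; it does not yield ``$n\le 2^d$'', since $n$ was fixed at the outset of the inductive step.

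In short, this is an honest outline that correctly locates the hard step but does not carry it out; as written it is a plan rather than a proof.
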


In \cite{Hoffman2006} Hoffman stated a mixed-integer version of this theorem, which generalizes both Helly's theorem and Doignon's version. It was later rediscovered and proved in detail by Averkov and Weismantel \cite{Averkov2012}.
\begin{theorem}[Mixed-Integer Helly] \label{thm:mixed-helly}
Let $\F$ be a family of $n\geq (d+1)2^k$ convex sets in $\R^{d+k}$, where $d,k\geq 0$ and $d+k \geq 1$. If in every subfamily of $\F$ of size $(d+1)2^k$ all sets share a point in $\R^d\times\Z^k$, then all sets in $\F$ share a point in $\R^d\times\Z^k$.
\end{theorem}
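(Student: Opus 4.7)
The natural plan is to prove the theorem by induction on the number $k$ of integer coordinates (with the sets understood as convex subsets of $\R^d\times\R^k$). The base case $k=0$ is precisely Helly's theorem in $\R^d$, which gives the advertised constant $(d+1)\cdot 2^0=d+1$. For the inductive step, I would assume the result with constant $(d+1)2^{k-1}$ for convex sets in $\R^d\times\R^{k-1}$ and integer points in $\R^d\times\Z^{k-1}$, and then argue by contrapositive: given a finite family $\F$ of convex sets in $\R^d\times\R^k$ whose intersection $C:=\bigcap\F$ contains no point of $\R^d\times\Z^k$, I want to exhibit a subfamily of size at most $(d+1)2^k$ whose intersection also avoids $\R^d\times\Z^k$.

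The strategy would be to combine two \emph{slicing witnesses}. For each integer $z\in\Z$, form the slice family $\F_z:=\{F\cap(\R^{d+k-1}\times\{z\})\}_{F\in\F}$, regarded as a family of convex sets in $\R^{d+k-1}$; since $C$ contains no point of $\R^d\times\Z^k$, the slice $C\cap\{y_k=z\}$ avoids $\R^d\times\Z^{k-1}\times\{z\}$ for every $z$. The inductive hypothesis applied at level $z$ therefore produces a witness subfamily $\mathcal{G}_z\subseteq\F$ of size at most $(d+1)2^{k-1}$ whose slice at $y_k=z$ still misses $\R^d\times\Z^{k-1}$. Two such witnesses, at carefully chosen integer levels $z^-\leq z^+$, then combine into a candidate subfamily $\mathcal{G}_{z^-}\cup\mathcal{G}_{z^+}$ of size at most $(d+1)2^k$.

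The heart of the proof, and what I expect to be the main obstacle, is showing that this combined subfamily really has no common point in $\R^d\times\Z^k$. The intersection $\bigcap(\mathcal{G}_{z^-}\cup\mathcal{G}_{z^+})$ is a convex set possibly much larger than $C$, so a priori it could harbour integer points at integer levels $y_k\notin\{z^-,z^+\}$ or even outside $[z^-,z^+]$. To rule this out I would project $C$ onto the $y_k$-axis, obtaining an integer-free convex interval, and use one-dimensional Helly (the Doignon case $d=0$, $k=1$) to pick $z^-,z^+$ as the two consecutive integer levels pinching this projection, while enlarging each $\mathcal{G}_{z^\pm}$ by at most one auxiliary set that cuts off the half-line outside the slab $\{z^-\leq y_k\leq z^+\}$. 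Any integer point of the combined intersection would then have to sit inside one of the two witness slices, contradicting their construction.

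Making this rigorous in full generality (in particular, handling the case where the projection of $C$ onto $y_k$ crosses several integers and controlling the budget so that the doubling remains exactly $(d+1)2^{k-1}\cdot 2=(d+1)2^k$) is the delicate part: it requires passing to polyhedral approximations of the $F_i$ and running a cutting-plane / LP-duality argument, which is essentially the route taken by Averkov and Weismantel and what ultimately justifies the sharp constant $(d+1)2^k$ rather than a slightly weaker one.
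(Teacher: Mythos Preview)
The paper does not prove this theorem at all: it is quoted from the literature (Hoffman; Averkov--Weismantel) and used as a black box to feed into the Ordered-Helly framework. So there is no ``paper's own proof'' for you to match, and your write-up should simply cite the result rather than attempt to reprove it.

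That said, since you sketched an argument, let me point out where it breaks. Your pinching step assumes you can choose consecutive integer levels $z^-<z^+$ so that the projection of $C=\bigcap\F$ onto the $y_k$-axis is trapped in $(z^-,z^+)$. But $C\cap(\R^d\times\Z^k)=\emptyset$ does \emph{not} force the $y_k$-projection of $C$ to miss integers: the projection can be a long interval crossing many integers, with the lattice avoided only inside each fiber. In that regime there are no ``two consecutive levels pinching the projection'', and your fix of ``enlarging each $\mathcal{G}_{z^\pm}$ by at most one auxiliary set that cuts off the half-line'' both fails to exist (nothing cuts off those half-lines, since $C$ genuinely extends across them) and, were it available, would already overshoot the budget to $(d+1)2^k+2$. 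You correctly sense this is the crux and defer to Averkov--Weismantel; but as written the sketch is not a proof, and for the purposes of this paper none is needed.
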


Let us show that a corresponding Hadwiger-Debrunner-type theorem holds.

\begin{theorem}[Mixed-Integer Hadwiger-Debrunner]
$ $\\
Let $d,k \geq 0$ be integers such that $d+k\geq1$. Let $(p,q)$ be a $(d+1)2^k$-admissible pair. Let $\F$ be a finite family of sets obtained as the intersection of $\R^d\times\Z^k$ with a compact convex set in $\R^{d+k}$. Suppose that $\F$ has the $(p,q)$ property. Then there exist $p-q+1$ points in $\R^d\times\Z^k$ stabbing $\F$.
\end{theorem}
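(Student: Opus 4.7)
The plan is to instantiate \cref{thm:sysHD} by exhibiting an Ordered-Helly system tailored to the mixed-integer setting. I would take the base set $\mathcal{B} := \R^d \times \Z^k$ with the lexicographic order $\preceq$ it inherits as a subset of $\R^{d+k}$, let $\mathcal{D}$ consist of all sets of the form $C \cap \mathcal{B}$ for $C$ a compact convex set in $\R^{d+k}$ (these are exactly the sets appearing in the statement, modulo what looks like a typo in the ambient space), let $\mathcal{C}$ consist of all sets of the form $C \cap \mathcal{B}$ for $C$ an arbitrary convex set in $\R^{d+k}$, and set $h := (d+1) 2^k$. The task then reduces to checking the four axioms of an Ordered-Helly system, after which the theorem follows from \cref{thm:sysHD} by unfolding the notion of $h$-admissibility.

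The substantive content of this system is the Helly property, which is immediate from the Mixed-Integer Helly theorem: a subfamily of $\mathfrak{S}$-convex sets has a common point in $\mathcal{B}$ if and only if the associated family of convex sets in $\R^{d+k}$ has a common point in $\R^d \times \Z^k$, and the threshold $h = (d+1)2^k$ is exactly what that theorem requires. Intersection closure holds because $(C_1 \cap \mathcal{B}) \cap (C_2 \cap \mathcal{B}) = (C_1 \cap C_2) \cap \mathcal{B}$ and both compactness and convexity pass to intersections. Attainable minimum follows by observing that any $\mathfrak{S}$-compact set is a closed subset of a compact $C \subset \R^{d+k}$, hence itself compact in $\R^{d+k}$, so its lex-min is attained by successively minimizing each coordinate over the (still compact) set where the previous ones are already at their minimum.

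The only axiom requiring a small separate verification is convex order, where I need the lex-lower set $L_t := \{y \in \R^{d+k} : y \prec t\}$ to be convex in $\R^{d+k}$, so that $S_t = L_t \cap \mathcal{B}$ is an $\mathfrak{S}$-convex set. This is a standard check: for any two points $p, q \in L_t$, let $i$ be the smallest coordinate at which $\min(p, q)$ (entrywise via $\prec$) differs from $t$; then along the segment from $p$ to $q$, all coordinates below $i$ stay equal to $t$'s and the $i$-th coordinate stays strictly below $t_i$ by convexity of the real-valued inequality. Once the four axioms are in place, \cref{thm:sysHD} directly yields the desired $p-q+1$ stabbing points in $\mathcal{B} = \R^d \times \Z^k$. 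There is essentially no obstacle: all the deep geometric content is absorbed into the Mixed-Integer Helly theorem, and the Ordered-Helly framework was designed precisely so that Hadwiger--Debrunner style results can be bootstrapped from it mechanically.
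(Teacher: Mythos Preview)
Your proposal is correct and matches the paper's proof essentially verbatim: the paper defines the same Ordered-Helly system $(\mathcal{B},\mathcal{C},\mathcal{D},(d+1)2^k,\leq_{lex})$ with $\mathcal{B}=\R^d\times\Z^k$, invokes the Mixed-Integer Helly theorem for the Helly property, declares the remaining axioms ``easy to check'', and applies \cref{thm:sysHD}. You simply spell out those checks in more detail than the paper does (and you also correctly flag the $\R^d$ vs.\ $\R^{d+k}$ typo in the ambient space).
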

\begin{proof}
Let $\mathcal{B} = \R^d\times\Z^k$, let $\mathcal{C}$ be the family of all sets obtained as the intersection of $\mathcal{B}$ with a convex set in $R^{d+k}$ and let $\mathcal{D}$ be the family of all sets obtained as the intersection of $\mathcal{B}$ with a compact convex set in $R^{d+k}$. Then, using \cref{thm:mixed-helly}, it is easy to check that $(\mathcal{B},\mathcal{C},\mathcal{D},(d+1)2^k, \leq_{lex})$ is an Ordered-Helly system. Thus, using \cref{thm:sysHD}, we get the result.
\end{proof}

More generally, every upper bound on an $S$-Helly number leads to a corresponding Hadwiger-Debrunner version if $S$ is closed in $\R^d$. The corresponding algorithmic results also follow, provided we have access to the required oracles. 

\section{Abstract convex geometries}

Let us now explore how the structure of Ordered-Helly systems relates to the structure of abstract convex geometries as introduced by  Edelman and Jamison \cite{Edelman1985}. Convex geometries are an abstraction capturing the basic combinatorial structure of classical convexity in a similar manner to matroids capturing the basic combinatorial properties of linear independence. Convex geometries appear in many contexts outside of convex sets such as graph theory or order theory. We refer the interested reader to \cite{Edelman1985} or to Chapter III of \cite{Korte1991} for an in-depth overview. We will only go over the basic definitions and theorems needed for our purpose, which can all be found in the two sources we just mentioned.

\subsection{Some background}
For the following definitions, it is useful to imagine the operator $\tau$ as analogous to the convex hull operator on a point set. 
\begin{definition}
Consider some finite set $E$ and a family $\mathcal{N}$ of subsets of $E$. Let $\tau$ be the operator defined on subsets of $E$ as $\tau(A) = \bigcap\{X \mid A\subset X,\ X\in \mathcal{N}\}$. We say that $(E,\mathcal{N})$ is a convex geometry if it has the following properties.
\begin{enumerate}
\item $\emptyset \in \mathcal{N}$, $E \in \mathcal{N}$.
\item $X,Y \in \mathcal{N}$ implies $X\cap Y \in \mathcal{N}$.
\item If $y,z\not\in\tau(X)$ and $z\in\tau(X\cup \{y\})$ then $y\not\in \tau(X\cup \{z\})$. 
\end{enumerate}
The sets in $\mathcal{N}$ are called convex.
\end{definition}

Extreme points are defined in the same way as in the Euclidean setting:
\begin{definition}
For a set $A \subset E$, we say that $x\in A$ is an extreme point of $A$ if $x\not\in \tau(A \setminus \{x\})$. The set of extreme points of $A$ is denoted by $ex(A)$.

A set $X\subset E$ is called free if $X=ex(X)$.
\end{definition}

We will use the following concept.
\begin{definition}
A sequence $x_1,\ldots,x_k$ of points of $E$ is called a shelling sequence if for all $1\leq i \leq k$, $x_i$ is an extreme point of $E\setminus \{x_1,\ldots,x_{i-1}\}$.
\end{definition}

A shelling sequence can be thought of as a way to reach a convex set by starting with the whole set $E$ and stripping away points one after the other in such a way that the set remains convex at each step. A useful characterisation of convex sets for our purpose is the following, where we describe a convex set via a shelling process.
\begin{proposition}[\cite{Edelman1985}]
A set $X \subset E$ is convex if and only if there exists a shelling sequence $x_1,\ldots,x_k$ such that $X = E\setminus \{x_1,\ldots,x_k\}$.
\end{proposition}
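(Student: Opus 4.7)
The plan is to prove both directions separately, relying on two standard ingredients from the theory of convex geometries: first, that $\tau$ is a closure operator (so it is monotone and satisfies $A\subset\tau(A)$), which follows immediately from axioms 1 and 2; second, the Krein--Milman-type identity $Y = \tau(ex(Y))$ that holds for any convex set $Y$ in a convex geometry, which is a standard consequence of the anti-exchange axiom and can be cited from \cite{Edelman1985}.

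For the \emph{backward direction}, given a shelling sequence $x_1,\ldots,x_k$, I would prove by induction on $i$ that $X_i := E\setminus\{x_1,\ldots,x_i\}$ is convex. The base case $X_0 = E$ is convex by axiom 1. For the inductive step, set $Y := X_{i+1} = X_i\setminus\{x_{i+1}\}$. Monotonicity of $\tau$ and the inductive hypothesis give $\tau(Y)\subset\tau(X_i) = X_i$. Moreover, by the definition of shelling sequence, $x_{i+1}$ is an extreme point of $X_i$, so $x_{i+1}\notin\tau(X_i\setminus\{x_{i+1}\}) = \tau(Y)$. Combining, $\tau(Y)\subset X_i\setminus\{x_{i+1}\} = Y$, and since always $Y\subset\tau(Y)$, we conclude $\tau(Y)=Y$, i.e., $Y$ is convex.

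For the \emph{forward direction}, given a convex $X$, I would build the shelling sequence greedily. Start with $Y_0 := E$. At step $i$, if $Y_i = X$, stop; otherwise, pick any $x_{i+1}\in ex(Y_i)\setminus X$ and set $Y_{i+1} := Y_i\setminus\{x_{i+1}\}$. By the backward direction applied to the partial sequence already constructed, each $Y_i$ remains convex, so the notion of extreme point is well-behaved. Since $E$ is finite and $|Y_i|$ strictly decreases, the process terminates after at most $|E\setminus X|$ steps and outputs a shelling sequence with $X = E\setminus\{x_1,\ldots,x_k\}$.

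The \textbf{main obstacle} is justifying the greedy step, namely that whenever $X\subsetneq Y_i$, there actually is an extreme point of $Y_i$ that does not lie in $X$. This is precisely where convexity of $X$ is used and where the Krein--Milman identity is indispensable: if one had $ex(Y_i)\subset X$, then by monotonicity of $\tau$ and convexity of $X$ we would get $Y_i = \tau(ex(Y_i))\subset \tau(X) = X$, contradicting $X\subsetneq Y_i$. Hence some $x_{i+1}\in ex(Y_i)\setminus X$ exists and the greedy procedure never gets stuck. Everything else in the argument is bookkeeping on top of these two ingredients.
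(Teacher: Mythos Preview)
The paper does not prove this proposition at all: it is quoted verbatim as background from \cite{Edelman1985} and used as a black box, so there is no ``paper's own proof'' to compare against. Your argument is the standard one and is correct. The backward direction is clean; for the forward direction you correctly identify the only nontrivial point, namely that the greedy peeling never stalls, and you resolve it via the Krein--Milman identity $Y=\tau(ex(Y))$ for convex $Y$ together with convexity of the target $X$. One small thing worth making explicit (you implicitly use it): the invariant $X\subset Y_i$ is maintained because at each step you remove only a point outside $X$, which is what makes the containment $\tau(ex(Y_i))\subset\tau(X)=X$ yield a genuine contradiction with $X\subsetneq Y_i$.
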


The final ingredient we need is the following Helly-type theorem for convex geometries.

\begin{theorem}[\cite{Edelman1985}]
\label{thm:abstracthelly}
Let $h(\mathcal{N})$ denote the smallest integer $k$ such that the following holds:

For a family $\F$ of convex sets, if every subfamily of size at most $k$ has a non-empty intersection, then $\F$ has a non-empty intersection.

Then $h(\mathcal{N})$ is equal to the maximum size of a free convex set.
\end{theorem}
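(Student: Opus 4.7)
\emph{Plan.} I would prove $h(\mathcal{N})=m$, where $m$ is the maximum size of a free convex set, by establishing the two inequalities separately. For the lower bound $h(\mathcal{N})\geq m$, I would start from a maximum free convex set $F=\{x_{1},\dots,x_{m}\}$ and define $C_{i}:=\tau(F\setminus\{x_{i}\})$ for each $i$. Since $F$ is convex, $\tau(F)=F$ and hence each $C_{i}\subseteq F$; freeness of $F$ gives $x_{i}\notin C_{i}$, while trivially $x_{i}\in C_{j}$ for every $j\neq i$. Thus any $m-1$ of the $C_{i}$'s share the omitted element, but $\bigcap_{i}C_{i}\subseteq F\setminus\{x_{1},\dots,x_{m}\}=\emptyset$, exhibiting a family of size $m$ that witnesses $h(\mathcal{N})\geq m$.

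For the upper bound $h(\mathcal{N})\leq m$, I would take a minimal family $\mathcal{F}=\{C_{1},\dots,C_{r}\}$ of convex sets with empty intersection and aim at $r\leq m$. Minimality lets us pick witnesses $x_{i}\in\bigcap_{j\neq i}C_{j}\setminus C_{i}$; since $\{x_{j}:j\neq i\}\subseteq C_{i}$ sits inside the convex set $C_{i}$, we have $x_{i}\notin\tau(\{x_{j}:j\neq i\})$, so $X:=\{x_{1},\dots,x_{r}\}$ is free. An application of the anti-exchange axiom on a minimal chain $X\setminus\{x\}\subseteq A_{0}\subsetneq\dots\subsetneq A_{k}=\tau(X)\setminus\{x\}$ that first captures $x$ in its closure at step $k$ (the last added element $y_k$ would be forced, by anti-exchange, to lie outside $\tau(A_{k-1}\cup\{x\})\supseteq\tau(X)$, contradicting $y_k\in\tau(X)$) even upgrades this to $X=\mathrm{ex}(\tau(X))$.

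The main obstacle is then to turn $X$ into a free \emph{convex} set of size at least $r$, from which $r\leq m$ would follow. This is genuinely subtle: a free set need not itself be convex, and one cannot in general shell $\tau(X)$ down while preserving $X$ as extreme points — for instance, three leaves of a star in the subtree convex geometry form a free set of size three while $\tau(X)$ has no extreme point outside $X$ and the maximum free convex set has size only two. The resolution therefore has to exploit the full structure of the minimal family $\mathcal{F}$, not merely the set $X$, and argue, via the shelling characterization together with the anti-exchange axiom, that the existence of such a family forces a free convex set of size $r$ to appear somewhere in the geometry. This final combinatorial step is where I expect the bulk of the effort to go.
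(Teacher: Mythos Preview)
The paper does not prove this theorem; it is quoted from \cite{Edelman1985} as background and invoked as a black box in the proof of \cref{thm:abstractHD}. There is thus no argument in the paper against which to compare your attempt.

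On the merits of your plan: the lower bound $h(\mathcal{N})\geq m$ is correct and standard. For a free convex set $F=\{x_1,\dots,x_m\}$ one indeed has $\tau(F\setminus\{x_i\})=F\setminus\{x_i\}$ (since $F$ is convex and $x_i$ is extreme), and these $m$ convex sets form a minimal family with empty intersection, exactly as you describe.

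For the upper bound you have correctly isolated the genuine difficulty and are candid that you have not resolved it. Your star example is apt: the witness set $X$ extracted from a minimal empty-intersection family is free but need not be convex, and $\tau(X)$ cannot in general be shelled down to $X$, so neither $X$ nor $\tau(X)$ directly yields the desired free convex set. The argument in \cite{Edelman1985} does not try to salvage $X$; it goes instead through the lattice-theoretic structure specific to convex geometries --- meet-distributivity and the description of copoints (meet-irreducible convex sets) together with their attaching points --- to manufacture a free convex set of the required size from the minimal family. Without bringing in that additional structure, your outline remains a plan with an acknowledged gap rather than a proof.
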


\subsection{Hadwiger-Debrunner theorem for convex geometries}

We are now ready to state and prove a Hadwiger-Debrunner-type theorem for convex geometries.

\begin{theorem} \label{thm:abstractHD}
Consider a convex geometry $(E,\mathcal{N})$. Let $h$ be the size of a maximal free convex set and let $(p,q)$ be an $h$-admissible pair. Let $\F\subset \mathcal{N}$ be a family of $n\geq p$ non-empty convex sets. If $\F$ has the $(p,q)$-property then there exist $p-q+1$ elements of $E$ stabbing $\F$.
\end{theorem}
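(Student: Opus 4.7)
The plan is to instantiate an Ordered-Helly system $\mathfrak{S} = (\mathcal{B}, \mathcal{C}, \mathcal{D}, h, \preceq)$ from the data of the convex geometry and then invoke \cref{thm:sysHD} directly. The obvious choices are $\mathcal{B} = E$, $\mathcal{C} = \mathcal{D} = \mathcal{N}$, and Helly-number equal to the maximal size of a free convex set, which by \cref{thm:abstracthelly} is indeed the smallest integer for which the Helly property holds. Thus the only non-trivial design choice is a total order $\preceq$ on $E$ witnessing all four axioms of an Ordered-Helly system.

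Three of the four axioms are essentially free. Intersection closure is part of the definition of a convex geometry. Attainable minimum is automatic: $E$ is finite and $\preceq$ will be total, so every non-empty subset has a unique $\preceq$-minimum. The Helly property is exactly \cref{thm:abstracthelly} with parameter $h$. What remains and constitutes the main obstacle is the convex order axiom, which requires that every strict lower segment $\{x \in E \mid x \prec t\}$ be an $\mathfrak{S}$-convex set.

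To arrange this, I would exploit shelling sequences. Since every non-empty subset of $E$ admits at least one extreme point in the convex-geometry sense, one can peel extreme points off $E$ one at a time to produce a full shelling sequence $x_1, x_2, \ldots, x_n$ of $E$. Define $\preceq$ by \emph{reversing} this sequence, so that $x_n \prec x_{n-1} \prec \cdots \prec x_1$. Then for any $t = x_i$, the strict lower segment equals $\{x_{i+1}, \ldots, x_n\} = E \setminus \{x_1, \ldots, x_i\}$, and the proposition characterising convex sets via shelling (quoted from \cite{Edelman1985}) tells us exactly that this complement of a shelling prefix is convex. Hence the convex order axiom holds.

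With the Ordered-Helly system in place, the theorem follows immediately from \cref{thm:sysHD}: the hypotheses that $(p,q)$ be $h$-admissible and that $\F \subset \mathcal{N}$ have the $(p,q)$-property translate verbatim into the hypotheses of that theorem, so one obtains $p-q+1$ elements of $\mathcal{B} = E$ stabbing $\F$, as desired. The only real work is the shelling construction of $\preceq$; everything else is bookkeeping.
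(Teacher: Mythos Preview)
Your proposal is correct and is essentially identical to the paper's own proof: both instantiate the Ordered-Helly system with $\mathcal{B}=E$, $\mathcal{C}=\mathcal{D}=\mathcal{N}$, Helly number $h$, and define the total order $\preceq$ by reversing a full shelling sequence of $E$, so that strict lower segments are complements of shelling prefixes and hence convex. The remaining axioms are checked exactly as you indicate, and \cref{thm:sysHD} is then invoked verbatim.
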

\begin{proof}
We know that $\emptyset$ is convex, thus there exist a shelling sequence $S=\{x_1, \ldots, x_k\}$ such that $\emptyset = E\setminus S$, i.e. $S=E$. Let for $1\leq i,j \leq k$, let us say that $x_i \preceq x_j$ if and only if $i\geq j$. Let $1\leq t \leq k$ be an integer. Because $\{x_1,x_2,\ldots, x_{t-1}\}$ is a valid shelling sequence, $\{x\in E \mid x\leq x_t\}$ is a convex set. Thus, $\preceq$ has the convex order property.

Let $h$ be the maximum size of a free convex set. Then, it is easy to verify that $\mathfrak{S}=(E,\mathcal{N},\mathcal{N}, h, \preceq)$ also has the intersection closure and attainable minimum properties. The Helly property (for Helly number $h)$ is given by \cref{thm:abstracthelly}.

Thus, $\mathfrak{S}$ is an Ordered-Helly system and we get the result from \cref{thm:sysHD}.
\end{proof}

\subsection{Two examples of convex geometries}

We will now give two illustrative examples of abstract convex geometries and the resulting Hadwiger-Debrunner type results we obtain for them. One such convex geometry (arguably the most natural) is the one obtained by taking convex hulls of subsets on a finite point set in $\R^d$. This is conceptually similar to the case of polytopes in Euclidean space which we have already discussed. The following two examples are perhaps not so immediately related.

\paragraph*{Subtrees of a tree}
\begin{proposition}[\cite{Edelman1985}]
Let $T$ be a tree on a set of vertices $V$. Let $\mathcal{N}$ be the family of all sets of vertices corresponding to subtrees of $T$. Then $(V,\mathcal{N})$ is a convex geometry with Helly number $h(\mathcal{N}) = 2$.
\end{proposition}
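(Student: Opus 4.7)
The plan is to verify the three convex-geometry axioms for $(V,\mathcal{N})$ and then identify the Helly number via \cref{thm:abstracthelly}. The starting point is the concrete description of the hull operator: $\tau(A)$ equals the vertex set of the smallest subtree of $T$ containing $A$, or equivalently the union of the unique $T$-paths between all pairs of elements of $A$.

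With this in hand, axioms $1$ and $2$ are immediate: $\emptyset$ and $V$ are subtrees, and the intersection of two subtrees is again a subtree, since in a tree the unique path between two vertices of a connected subgraph stays inside the subgraph. For the anti-exchange axiom, the key observation is that if $y \notin \tau(X)$ and $v$ denotes the vertex of $\tau(X)$ closest to $y$ in $T$, then $\tau(X \cup \{y\}) = \tau(X) \cup P(y,v)$, where $P(y,v)$ is the unique $T$-path from $y$ to $v$. Given $y,z \notin \tau(X)$ with $y \neq z$ and $z \in \tau(X \cup \{y\})$, this forces $z$ to lie strictly on $P(y,v)$ between $y$ and $v$. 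A short uniqueness-of-paths argument then shows that $v$ is also the closest vertex of $\tau(X)$ to $z$, so $\tau(X \cup \{z\}) = \tau(X) \cup P(z,v)$; since $z$ strictly separates $y$ from $v$ along $P(y,v)$, the vertex $y$ lies neither on $P(z,v)$ nor in $\tau(X)$, giving the required $y \notin \tau(X \cup \{z\})$.

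To pin down the Helly number, I would invoke \cref{thm:abstracthelly} and compute the size of the largest free convex subset of $V$. Any edge $\{u,v\}$ of $T$ is clearly free and convex, so this size is at least $2$; conversely, any subtree $A$ with $|A| \geq 3$ contains a non-leaf vertex $w$, which lies on the $T$-path between two of its neighbors in $A$, hence $w \in \tau(A \setminus \{w\})$ and $A$ is not free. Thus the maximum free convex set has size exactly $2$, and $h(\mathcal{N}) = 2$. The only step requiring genuine care is the anti-exchange axiom, where identifying the closest point of $\tau(X)$ to the added vertex serves as the right geometric pivot; everything else is routine tree combinatorics.
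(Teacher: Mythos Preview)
The paper does not prove this proposition at all; it is quoted from Edelman and Jamison \cite{Edelman1985} as background and used as a black box. So there is no ``paper's own proof'' to compare against.

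Your argument is correct. The description of $\tau$ as the minimal spanning subtree, the verification of closure under intersection, and the anti-exchange step via the gate vertex $v$ (the unique closest point of $\tau(X)$ to $y$) are all standard and sound; the observation that $z$ lies strictly between $y$ and $v$ on $P(y,v)$, hence that $v$ is also the gate for $z$ and $y \notin \tau(X)\cup P(z,v)$, is exactly the right way to handle axiom~3. Your computation of the Helly number via \cref{thm:abstracthelly} is also fine: an edge witnesses a free convex set of size $2$, and any subtree on three or more vertices has an internal vertex lying on the path between two of its neighbours, hence is not free. (Strictly speaking your lower bound of $2$ assumes $T$ has at least one edge, but this degenerate case is clearly not what the proposition is about.)
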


This means that for a given tree $T$ and a given family of subtrees of $T$, if all pairs of subtrees intersect at some vertex, then all subtrees share a vertex. Using \cref{thm:abstractHD} we can thus get the following result.

\begin{corollary}
Let $T$ be a tree and let $\F$ be a family of subtrees of $T$ (represented as sets of vertices). Let $(p,q)$ be a $2$-admissible pair. Let $\F\subset \mathcal{N}$ be a family of non-empty subtrees of $T$ with the $(p,q)$-property. Then $\F$ can be stabbed with $p-q+1$ vertices.
\end{corollary}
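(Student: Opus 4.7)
The plan is to recognize this corollary as a direct specialization of \cref{thm:abstractHD} to the convex geometry associated with a tree, and essentially do no new work beyond bookkeeping. First I would invoke the preceding proposition to ensure that $(V, \mathcal{N})$, with $\mathcal{N}$ the family of vertex sets of subtrees of $T$, is a convex geometry with Helly number $h(\mathcal{N}) = 2$. By \cref{thm:abstracthelly}, this Helly number equals the maximum size of a free convex set, so the parameter $h$ appearing in \cref{thm:abstractHD} is $2$ for this geometry.

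Next I would check that the hypotheses of \cref{thm:abstractHD} are in place: $(p,q)$ is assumed to be a $2$-admissible pair, and $\F$ is a finite family of non-empty convex sets (subtrees) with the $(p,q)$-property. Applying \cref{thm:abstractHD} then immediately yields the existence of $p - q + 1$ elements of $E = V$ stabbing $\F$, which in the tree setting is precisely $p - q + 1$ vertices of $T$ meeting every subtree in $\F$.

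The only small identification worth pointing out explicitly, to make the translation back to the combinatorial statement crystal clear, is that the abstract notion of stabbing — a set of base-set elements meeting every set in $\F$ — coincides with the concrete one for trees: a set of vertices such that each subtree in $\F$ contains at least one of them. Under this identification the conclusion of \cref{thm:abstractHD} is literally the conclusion of the corollary.

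There is essentially no obstacle, since the entire content lives in \cref{thm:abstractHD}. If I wanted to sanity-check the apparatus used inside that proof, I would note that a shelling sequence in this convex geometry is just a repeated peeling of leaves of $T$ (restricted to the current remaining subtree of $V$), and the induced total order $\preceq$ corresponds to reading a leaf-peeling sequence in reverse; the \emph{convex order} property then reduces to the elementary fact that the set of vertices still present after peeling off an initial segment of leaves induces a subtree. This provides a reassuring concrete picture of the abstract machinery, but nothing in the proof of the corollary itself requires it.
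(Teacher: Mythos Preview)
Your proposal is correct and matches the paper's approach exactly: the paper derives this corollary directly from \cref{thm:abstractHD} via the preceding proposition that subtrees form a convex geometry with Helly number $2$, with no additional argument. Your extra remarks about leaf-peeling as a shelling sequence are a nice concrete illustration but, as you note, are not needed for the proof itself.
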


From an algorithmic point of view, let us suppose that the tree is represented as a conventional pointer structure and that the subtrees in $\F$ are themselves represented in full as trees. One can compute a shelling sequence of the empty tree (and thus $\preceq$) by starting with the whole tree $T$ and choosing leaves to cut off until we reach the empty tree. This amounts to $O(|V|)$ preprocessing time. 
We can trivially find the $\preceq$-min of a subtree or test if a subtree $S$ contains a vertex in $O(|V(S)|)$ time. 

Using \cref{thm:stabOH}, this leads to an algorithm finding stabbing vertices in $\bigO(|V| + p(\#\F))$ where $\#\F$ is the sum of the number of vertices over all subtrees in $\F$.

\paragraph*{Ideals of a partially ordered set}$ $

For a poset $(E,\leq)$, we say that a set $S\subset E$ is an ideal of $E$ if for all $x\in S$ and all $y\in E$, $y\leq w \Rightarrow y\in S$. Let $width(E)$ denote the maximum size of an antichain in $E$. Then the following holds.
\begin{proposition}[\cite{Edelman1985}]
Let $(E,\leq)$ be a finite poset. Let $\F = \{S\subset E \mid S\ \text{is an ideal}\}$. Then $(E,\F)$ is a convex geometry with Helly number $width(E)$.
\end{proposition}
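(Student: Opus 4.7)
The plan is to verify two separate claims: that the family of ideals satisfies the axioms of a convex geometry, and that the Helly number of this geometry equals $width(E)$. I would tackle them in this order.

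For the convex-geometry axioms, each requirement is checked directly. Both $\emptyset$ and $E$ are trivially ideals, and the intersection of two ideals is again an ideal since downward-closedness is preserved under intersection. The closure operator $\tau$ admits the explicit description $\tau(A) = \{y \in E : y \leq x \text{ for some } x \in A\}$, the down-closure of $A$, since this set is simultaneously an ideal containing $A$ and contained in every ideal containing $A$. With this formula in hand, the anti-exchange axiom follows from a short order-theoretic argument: given distinct $y, z \notin \tau(X)$ with $z \in \tau(X \cup \{y\})$, the fact that $z$ is not below any element of $X$ forces $z \leq y$; were $y \in \tau(X \cup \{z\})$ as well, the symmetric reasoning would yield $y \leq z$, contradicting antisymmetry.

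For the Helly number, I would invoke \cref{thm:abstracthelly} to reduce the question to identifying the maximum size of a free convex set. Analysing the closure operator shows that the extreme points of a set $X$ coincide with its maximal elements, because $x \in \tau(X \setminus \{x\})$ is equivalent to $x$ being below some other element of $X$. Consequently, free sets are precisely the antichains of $E$, which immediately yields the upper bound $\text{Helly} \leq width(E)$.

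For the matching lower bound, I would construct a family of $w := width(E)$ ideals certifying that the Helly number is at least $w$. Starting from a maximum antichain $A = \{a_1, \ldots, a_w\}$, the natural plan is to associate to each $a_i$ an ideal $I_i$ that excludes $a_i$ but includes the remaining antichain elements in a carefully chosen way (for instance as a suitable sub-ideal of $\downarrow A$ or as the complement $E \setminus \uparrow a_i$), so that any $w-1$ of the $I_i$'s still share a common element witnessing the antichain element they jointly miss, while no element of $E$ avoids being excluded by at least one $I_i$. The main obstacle is precisely this construction step: one must exploit both the maximality of $A$ and the structure of $\leq$ on $E$, and verifying simultaneously the $w$-wise emptiness and the $(w-1)$-wise non-emptiness of the intersections requires a careful case analysis on the comparability of elements to the $a_i$.
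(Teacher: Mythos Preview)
The paper does not prove this proposition. It is quoted from Edelman and Jamison \cite{Edelman1985} as a known result and used as a black box to derive the subsequent corollary via \cref{thm:abstractHD}; no argument for it appears in the paper. There is therefore nothing in the paper to compare your proposal against.

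For what it is worth, your verification of the convex-geometry axioms is correct, and your identification of $\tau$ with the down-closure is the right tool. Your upper bound $h(\mathcal{N})\le width(E)$ via \cref{thm:abstracthelly} is also fine, though note that the theorem speaks of free \emph{convex} sets, i.e.\ antichains that are also ideals; the inequality still follows since every such set is in particular an antichain. Your lower-bound sketch, as you yourself flag, is incomplete: for the candidate family $I_i = E\setminus{\uparrow}a_i$ built from an arbitrary maximum antichain $A$, the total intersection $\bigcap_i I_i$ need not be empty (any element strictly below every member of $A$ survives), so further work or a direct appeal to the original source is needed there.
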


Using \cref{thm:abstractHD} we can thus get the following result.

\begin{corollary}
Let $(E,\leq)$ be a finite poset. Let $(p,q)$ be an $width(E)$-admissible pair and let $\F$ be a family of non-empty ideals of $E$ with the $(p,q)$-property. Then $\F$ can be stabbed by $p-q+1$ elements of $E$.
\end{corollary}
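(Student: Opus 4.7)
The plan is to derive this as an immediate consequence of \cref{thm:abstractHD} applied to the convex geometry whose convex sets are the ideals of $(E,\leq)$. The proposition cited just above the corollary tells us that $(E,\F)$, where $\F$ is the family of all ideals, is indeed a convex geometry, and that its Helly number equals $width(E)$. So the only things to check are that the hypotheses of \cref{thm:abstractHD} are met in this instance.

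Concretely, I would proceed as follows. First, identify $h := width(E)$ as the Helly number of the convex geometry of ideals, invoking the cited proposition of Edelman and Jamison. Second, observe that the hypothesis ``$(p,q)$ is an $h$-admissible pair'' in the corollary is literally the hypothesis required by \cref{thm:abstractHD} with this value of $h$. Third, note that a family of non-empty ideals with the $(p,q)$-property is precisely a family of non-empty convex sets in $(E,\F)$ with the $(p,q)$-property, so the last hypothesis of \cref{thm:abstractHD} is also met. Applying \cref{thm:abstractHD} then yields $p-q+1$ elements of $E$ stabbing $\F$, which is exactly the conclusion.

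There is really no obstacle here: the corollary is a specialization of the general abstract theorem to a known example of a convex geometry, and the only ``work'' is a one-line verification that $width(E)$ plays the role of $h$ via the Edelman--Jamison proposition. If one wanted to be slightly more self-contained, one could remark that the associated shelling order on $E$ used inside the proof of \cref{thm:abstractHD} can in this case be chosen to be any linear extension of the reverse order of $\leq$ (peeling off maximal elements one by one keeps the complement an ideal), which makes the construction of the total order $\preceq$ from the proof fully explicit. No further argument is needed.
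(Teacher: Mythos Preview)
Your proposal is correct and matches the paper's approach exactly: the corollary is obtained by applying \cref{thm:abstractHD} to the convex geometry of ideals, using the Edelman--Jamison proposition to identify the Helly number with $width(E)$. The paper does not even write out a separate proof for this corollary, and your additional remark about choosing the shelling order as a reverse linear extension is a nice (though unnecessary) bonus.
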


From an algorithmic point of view, the situation is similar to the one for subtrees of a tree if we choose to represent ideals as the sets of their elements.

\part*{Conclusion}

We have shown how to stab convex polygons with a total of $n$ vertices and the $(p,q)$-property (for admissible $(p,q)$) in expected $\tilde{\bigO}(n^{4/3})$ time with respect to $n$. As an intermediate step, we compute a certain quantity $x_2^*$, which is a Hopcroft-Hard problem, in $\tilde{\bigO}(n^{4/3})$ expected time. While this is believed to be near optimal, finding a non-trivial lower-bound for the original problem remains open. For the 3D case, we have an algorithm running in expected $\tilde{\bigO}(n^{5/2})$ time with respect to $n$.

We have also considered other conditions which allow to conclude that a set of polygons can be stabbed with a fixed number of points, and applied our algorithm to those. One of these conditions is a new generalization of Helly's theorem in the plane in terms of holes in the union of convex sets.

Finally, we have derived $(p,q)$-theorems along with algorithms in other settings where Helly-type theorems are known. An interesting question would be to try deriving other related results in these settings, such as colourful or fractional versions of $(p,q)$-theorems.

A natural next step in the Euclidean setting would be to drop the restriction $(d-1)p < d(q-1)$ and find efficient algorithms for the Alon-Kleitman $(p,q)$-theorem. Their proof of existence of stabbing sets of constant size uses the fractional Helly theorem, whose proof is similar to the above proof of the Hadwiger-Debrunner $(p,q)$-theorem. It is thus conceivable that similar ideas could be applied to this more general case.

\appendix

\newpage
\part*{Appendices}

\section{From compact convex sets to general convex sets in the Hadwiger-Debrunner theorem}
\label{app:compact}
In the main body of this paper, we have proven the Hadwiger-Debrunner theorem for compact convex sets only, claiming that this was done without loss of generality. To see this let us describe how to reduce the general case to this one.

Let $\F = \{S_1, S_2,\ldots, S_n\}$ be a finite family of convex sets in $\R^d$ with the $(p,q)$-property for some admissible pair $(p,q)$. Let us construct a new family of compact convex sets $\F' = \{S'_1, S'_2,\ldots, S'_n\}$ as follows:

\begin{itemize}
\item For every subfamily $\mathcal{G} \subset \F$ of sets with non-empty common intersection, choose $p_\mathcal{G}$ to be a point in that intersection, and let $\mathcal{P}$ denote the set of all such chosen points.
\item For $i\in \{1,\ldots,n\}$, let $S'_i$ be the convex hull of $S_i \cap \mathcal{P}$.
\end{itemize}

It is clear that $\F'$ consists of compact convex sets and that whenever some subfamily $\mathcal{G} \subset \F$ of sets have a common intersection, the corresponding sets in $\F'$ also do. This means that $\F'$ has the $(p,q)$-property. Moreover, for $i\in \{1,\ldots,n\}$ we know that $S'_i \subset S_i$. If we can stab $\F'$ with $p-q+1$ points, the same holds for $\F$. Thus, if the Hadwiger-Debrunner theorem holds for compact convex sets, it also holds for general convex sets.

\section{Adapting the Bentley-Ottmann sweep-line algorithm}
\label{app:bentley-ottmann}
Here we briefly describe how to adapt the Bentley-Ottmann sweep-line algorithm \cite{Bentley1979} to solve the following problem in $\bigO(n^2\log(n))$ time.
\begin{problem}
Given a family of convex polygons in the plane with a total of $n$ vertices, compute a point $p$ stabbing as many polygons as possible.
\end{problem}

Imagine a vertical line sweeping through the plane, stopping each time it reaches the beginning of an edge, the end of an edge or the intersection of two edges, which we call an event (for simplicity, assume the events are separated along the horizontal axis by shifting them infinitesimally ). During the whole sweep, we keep track of the edges crossing our sweep line ordered according to the $y$ coordinate of their intersection with the sweep line. This forms the general idea behind the Bentley-Ottmann algorithm.

More specifically, the algorithm maintains a self-balancing Binary Search Tree (BST) of the segments intersecting the sweep line (this line is conceptual only and is not explicitly stored in any manner) as well as a priority queue of events to come. At each stage neighbouring edges on the sweep line are tested for future intersection and this intersection is added to the priority queue if it exists. Then the next event in the priority queue is considered and the imaginary sweep line is moved to that event. If this is the beginning or the end of an edge, this edge is respectively added to or deleted from the BST. If this event is the intersection of two edges, their positions in the BST are swapped. All these operations can be done in $\bigO(\log(n))$ time with the usual data structures for priority queues and self-balancing BSTs. Thus, as there are at most $\bigO(n^2)$ events, the whole sweep takes $\bigO(n^2\log(n))$ time.

Now, we can partition the edges of the polygons into two classes: those corresponding to the upper hull and those corresponding to the lower hull. We can then augment each node in the BST with the following information: for each node $v$ store the number of leaves which are lower hull edges and upper hull edges in the subtree rooted at $v$. These quantities can easily be maintained in $\bigO(\log(n))$ time per operation performed on the BST.

When considering a new event, we can now easily compute the number of polygons stabbed by the corresponding point by taking the number of upper hull edges above it and subtracting the number of lower hull edges above. Both of these latter quantities can be computed in $\bigO(\log(n))$ time by travelling up the BST from the vertex of interest to the root.

To compute a point $p$ stabbing as many polygons as possible, it is enough to consider only the points which we have defined as events. Thus, with this modified algorithm we can find the event point stabbing the largest number of polygons and thus solve the considered problem in $\bigO(n^2\log(n))$ time.

\section{Proof of \cref{thm:range-query}}
\label{app:range-query}

Here we give pointers to prove \cref{thm:range-query}. The methods used here are standard in the field of geometric range queries and are small variations on proofs found in \cite{Matousek1992} and \cite{Chan2016}. Although we recommend the lecture of these papers for a good understanding of these approaches, we give some details here for the sake of completeness.

Recall the statement of the theorem. 
\thmRangeQuery*

From now on, $d$, $k$ and the mappings $\phi_1,\ldots,\phi_k$ are fixed. For the sake of simplicity we suppose that all objects $s\in S$ we consider are of constant size and that $\phi_i(s)$ can be computed in constant time (but we could precompute all these points as a preprocessing step and then work only with the points $\phi_i(s)$). For any finite set of objects $S$ and any $1\leq k' \leq k'' \leq k$, we let $\phi^{k', k''}_S$ denote the function which maps $(k''-k'+1)$-tuples of halfspaces $H_{k'},\ldots,H_{k''}$ of $\R^d$ to 
\[\phi^{k', k''}_S(H_{k'},\ldots,H_{k''}) := \{s\in S \mid \phi_{k'}(s)\in H_{k'},\ldots,\phi_{k''}(s)\in H_{k''}\}.\]
When $k'=k''$, we also use the notation $\phi^{k'}_S := \phi^{k', k''}_S$.

We need the following result from \cite{Matousek1992}.
\begin{theorem}[\cite{Matousek1992}]\label{thm:matousek_decomp}
For any $1\leq k' \leq k$, any set of $n$ objects $S$ and any parameter $r<n$, we can build a datastructure in $\bigO(nr^{d-1})$ time with the following properties.
\begin{itemize}
    \item There are $t\in \bigO(\log r)$ collections of subsets of $S$, $\mathcal{C}_1, \mathcal{C}_1, \ldots, \mathcal{C}_{t}$ such that $\mathcal{C}_i$ contains $\bigO(\rho^i)$ subsets of size at most $n/\rho^i$ (where $\rho > 1$ is a constant dependent on $r$). We call all the subsets in these collections the inner sets.
    \item There are an additional $\bigO(r^d)$ subsets of $S$, each of size at most $n/r$, called the remainder sets. We denote the collection of these subsets as $\mathcal{R}$.
    \item For any halfspace $H$, we can in $\bigO(\log r)$ time return pointers to $t+1$ of these subsets, $S_1 \in \mathcal{C}_1, S_2 \in \mathcal{C}_2, \ldots, S_{t} \in \mathcal{C}_{t}$ and $S_\mathcal{R}\in \mathcal{R}$, all disjoint, such that 
    \[\phi^{k'}_S(H) = S_1 \cup S_2 \cup \cdots \cup S_{t} \cup \phi^{k'}_{S_\mathcal{R}}(H).\]
\end{itemize}
\end{theorem}
Note that the inner and remainder sets are not stored individually in full but implicitly as a partition tree structure.

We also need this other result from \cite{Chan2016}.
\begin{theorem}[\cite{Chan2016}]\label{thm:chan_composed_query}
For any $1\leq k' \leq k$, any set of $n$ objects $S$, we can build a datastructure in $\bigO(n\log^{k'}n)$ time which can then compute $|\phi^{1,k'}_S(H_{1},\ldots,H_{k'})|$ in $\bigO(n^{(d-1)/d}\log^{k'-1}n)$ expected time for any $k'$-tuple of halfspaces.
\end{theorem}

We can now prove the following, by adapting the proof of Theorem 6.1 in \cite{Matousek1992}.
\begin{theorem}\label{thm:big_space_query}
For any, $1\leq p \leq k$, any set of $n$ objects $S$, we can build a datastructure in time $\bigO(n^d\log^{p-d-1}n\log\log n)$ which can then compute $|\phi^{1, p}_S(H_{1},\ldots,H_{p})|$ in $\bigO(\log^p n)$ expected time for any $p$-tuple of halfspaces.
\end{theorem}
\begin{proof}
The proof is by induction. Let us consider the base case $p=1$. We use the datastructure from \cref{thm:matousek_decomp} with $k'=1$ and $r=n/\log^{d/(d-1)}n$. To each inner set we add an attribute representing the size of the set. To each remainder set we attach the corresponding datastructure from \cref{thm:chan_composed_query} with $k'=1$. We can then compute $|\phi^{1}_S(H)|$ by using the primary datastructure to find the decomposition into inner sets and remaining set $S_\mathcal{R}$ given by \cref{thm:matousek_decomp}, and use the secondary attached datastructure to compute $|\phi^{1}_{S_\mathcal{R}}(H)|$. Because there are $\bigO(r^d)$ remainder sets, all of size at most $n/r$, the total preprocessing time is $\bigO(nr^{d-1} + r^d(n/r)\log(n/r)) \subset \bigO(n^d\log^{1-d-1}n\log\log n)$. Because in each query there is only one remainder set to consider, the total expected query time is $\bigO(\log r + (n/r)^{(d-1)/d}) \subset \bigO(\log n)$. Thus the claim holds for $p=1$.

Let us now suppose it holds for $1\leq p-1 < k$ and show it holds for $p$. We again use the datastructure from $\cref{thm:matousek_decomp}$ with $r=n/\log^{d/(d-1)}n$, this time with $k'=p$. To each inner set we attach the corresponding datastructure we inductively suppose exists for $p-1$. To each remainder set we attach the corresponding datastructure from \cref{thm:chan_composed_query} with $k'=p$. Using the knowledge about the distribution of sizes of the inner sets in \cref{thm:matousek_decomp}, the total preprocessing time will be
\begin{align*}
    &\bigO\left(nr^{d-1} + r^d(n/r)\log^p(n/r) + \sum_{i=1}^{\bigO(\log r)} \rho^i(\frac{n}{\rho^i})^d\log^{p-1-d-1} n \log\log n\right) \\
    &\subset \bigO\left(n^d\log^{-1}n + n^d\log^{p-d-2}n(\log\log n)^{p+1} + n^d\log^{p-d-2}\log\log n\sum_{i=1}^{\bigO(\log r)} (\rho^{1-d})^i \right)\\
    &\subset \bigO\left(n^d\log^{p-d-1}n\log\log n\right).
\end{align*}

The total expected query time will be
\begin{align*}
    &\bigO\left(\log r + \log r\log^{p-1}n + (n/r)^{(d-1)/d}\log^{p-1}(n/r) \right) \\
    &\subset \bigO\left(\log^p n \right).
\end{align*}

Thus by induction, the claim holds.
\end{proof}

For the final proof we need some other results from \cite{Chan2016}, which we summarize in the following theorem.


\begin{theorem}[\cite{Chan2016}]\label{thm:chan_tradeoff}
For any set of $n$ objects $S$ and any $1\leq B \leq n/\log^{\omega(1)}n$, we can build a datastructure in $\bigO(n\log^k n)$ time with the following properties.
\begin{itemize}
    \item There are $\bigO((n/B)\log^{k-1} n)$ subsets of $S$. We denote their collection $\mathcal{C}$. The subset of $\mathcal{C}$ consisting of sets of size at most $B$ is denoted as $\mathcal{B}$.
    \item For any $k$-tuple of halfspaces $H_{1},\ldots,H_{k}$, we can compute in $\bigO((n/B)^{(d-1)/d}\log^{k-1} n)$ expected time pointers to $t \in \bigO((n/B)^{(d-1)/d}\log^{k-1} n)$ sets of $\mathcal{C}$, denoted as $C_1,C_2\ldots C_{t}$ and another $t' \in \bigO((n/B)^{(d-1)/d}\log^{k-1} n)$ to sets of $\mathcal{B}$, denoted as $S_1,S_2\ldots S_{t'}$, all disjoint. Moreover they are such that
    
    \[\phi_S(H_{1},\ldots,H_{k}) = \left(C_1 \cup C_2 \cup \cdots \cup C_{t}\right) \cup \left(\phi_{S_1}(H_{1},\ldots,H_{k}) \cup \cdots \cup \phi_{S_{t'}}(H_{1},\ldots,H_{k})\right).\]
\end{itemize}
\end{theorem}

We are now ready for the final proof, which is an adaptation of the proof of Theorem 6.2 in \cite{Matousek1992}.
\begin{proof}[of \cref{thm:range-query}]
We use the datastructure from \cref{thm:chan_tradeoff}, for some unspecified $B$. To each set in $\mathcal{C}$, we add an attribute representing the size of the set. To each set in $\mathcal{B}$, we also attach the datastucture we get from \cref{thm:big_space_query} with $p=k$. Because each set in $\mathcal{B}$ is of size at most $B$ and there are  $\bigO((n/B)\log^{k-1} n)$ such sets, the total preprocessing time will be $\bigO(n\log^k n + m)$, where 
\begin{align*}
    m &:= (n/B)\log^{k-1}n + (n/B)(\log^{k-1}n) B^d(\log^{k-d-1}B)\log\log B \\
    &\in \bigO\left(B^{d-1}n(\log^{2k-d-2}n)\log\log n\right).
\end{align*}
We can make $m$ vary between $n\log^{k-1}n$ and $n^d/\log^{\omega(1)}n$ by having $B$ vary between $1$ and $n/\log^{\omega(1)}n$ (but we can still choose $m < n\log^{k-1}n$ in the statement of the theorem, as the $n\log^k n$ term then dominates the preprocessing). The total expected query time will be 
\begin{align*}
&\bigO\left((n/B)^{(d-1)/d} + (n/B)^{(d-1)/d}(\log^{k-1}n)\log^k B\right)\\
&\subset \bigO\left((n/B)^{(d-1)/d})\log^{k-1}n\log^{2k-1} n\right).
\end{align*} 
By rewriting this in terms of $m$ instead of $B$ we get an expected query time of
\[\bigO\left((n/m^{1/d})\log^{2(k+(k-d-1)/d))}n(\log\log n)^{1/d}\right).\]
\end{proof}

\section{Counting the number of pairwise intersecting intervals}
\label{app:intersect-segment}
Here we simply show how, when given $n$ intervals, we can compute the number of pairwise intersecting intervals in $\bigO(n\log(n))$ time.

Let $\F = \{[a_1,b_1],[a_2,b_2],\ldots,[a_n,b_n]\}$ be a set of intervals. Sort all the $a_i$'s and $b_i$'s (which we will call events from now on) in $\bigO(n\log(n))$ time. For simplicity, suppose all these events are distinct (otherwise, we could break ties in a manner that makes the following work). Now, go through the events in order while maintaining the number of `current' intervals $c$ and the total number of interval intersections $t$ thus far encountered:
\begin{itemize}
\item Each time an $a_i$ event is encountered, increase $t$ by $c$ before increasing $c$ by one.
\item Each time a $b_i$ event is encountered, decrease $c$ by one.
\end{itemize}
This is done in linear time.

It is not hard to see that every pair of intersecting intervals will be counted exactly once in $t$, namely when reaching the start event of the interval starting the latest in the pair. Thus, by the end of the execution, $t$ represents the quantity we are interested in. 

\bibliography{pq-arxiv}

\end{document}